\newtheorem{theorem}{Theorem}
\newtheorem{lemma}{Lemma}
\title{Invariant classification of second-order conformally flat superintegrable systems}
\author[1]{J.J. Capel\footnote{email:{\tt joshua.capel@gmail.com}joshu}}
\author[1]{J.M. Kress\footnote{email:\tt j.kress@unsw.edu.au}}
\affil[1]{School of Mathematics and Statistics\\University of New South Wales\\Sydney NSW 2052, Australia}
\begin{document}

\maketitle

\begin{abstract}
In this paper we continue the work of Kalnins \textit{et al} in classifying all second-order conformally-superintegrable (Laplace-type) systems over
conformally flat spaces, using tools from algebraic geometry and classical invariant theory.
The results obtained show, through St\"ackel equivalence, that the list of known nondegenerate superintegrable systems over three-dimensional conformally flat spaces is complete.
In particular, a 7-dimensional manifold is determined such that each point corresponds to a conformal class of superintegrable systems.
This manifold is foliated by the nonlinear action of the conformal group in three-dimensions.
Two systems lie in the same conformal class if and only if they lie in the same leaf of the foliation.
This foliation is explicitly described using algebraic varieties formed from representations of the conformal group.
The proof of these results rely heavily on Gr\"obner basis calculations using the computer algebra software packages Maple and Singular.
\end{abstract}

\section{Introduction}

The classification of second-order superintegrable systems, both in classical and quantum mechanics, is a topic which is often revisited \cite{kalnins2000completenessE2C,kalnins2000completeness2sphere,kalnins2001completeness,kalnins2002superintegrability,kalnins2006classification,kress2007equivalence,kalnins2007algebraicvarieties2d,kalnins2007fine,kalnins2008fine}, usually with the goal of increasing our understanding of superintegrable systems in general.
Superintegrablility has deep links to Quasi-exact solvability, and it has been conjecture that all Quasi-exact solvable systems arise from superintegrable systems.
In the area of special functions, connections have been demonstrated between the contraction of second-order superintegrable systems and the Askey-Wilson scheme for orthogonal hypergeometric polynomials \cite{kalnins2014contractions}.

The classification of two-dimensional second-order systems is complete \cite{kalnins2001completeness,kalnins(2005):053510,kalnins(2006):093501} and substantial steps have been taken towards the classification of three-dimensional second-order systems \cite{kalnins2008fine,kalnins2011laplace}.
Most of the results obtained have relied on the use of separation of variables,
however some recent work in this area has focused on classifying systems based on the structure of their symmetry algebra \cite{kress2007equivalence,daskaloyannis2010quadratic,marquette2012classical} or by investigating their integrability conditions \cite{kalnins2008fine,kalnins2011laplace,kalnins2007algebraicvarieties3d}.
Both of these approaches may provide techniques that can be applied to non-separable systems.

This classification work naturally leads to consideration of so-called conformally-superintegrable (Laplace-type) systems \cite{kalnins2011laplace}, where the Poisson bracket of the symmetries with the Hamiltonian are only guaranteed to vanishes on the zero-energy hypersurface.
The link between superintegrable and conformally-superintegrable systems is given by the St\"ackel transform \cite{boyer:778}.
Using this transform, statements about superintegrable systems can be translated into statements about conformally-superintegrable systems.

The two-dimensional second-order maximally superintegrable maximum-parameter (non-degenerate) systems are St\"ackel equivalent to systems over constant curvature spaces \cite{kalnins(2005):053510}, and these have been completely classified \cite{kalnins2001completeness}.
The equivalence classes are identified uniquely by the structure of their symmetry algebras, which close polynomially \cite{kress2007equivalence}.
These systems were originally classified by the coordinates in which they separate, however in the case of Euclidean systems an alternative classification technique was found by foliating the algebraic variety of integrability conditions under the action of the Euclidean group \cite{kalnins2007algebraicvarieties2d}.

Similarly, the three-dimensional maximum-parameter systems have closed symmetry algebras distinct to each St\"ackel class \cite{daskaloyannis2010quadratic}.
Unfortunately this statement does not extend to non maximum-parameter systems (degenerate systems) as their symmetry algebra is no longer guaranteed to close polynomially.
However, foliating the variety of integrability conditions into subvarieties is possible in both the maximum-parameter and non maximum-parameter cases, and is currently being investigated \cite{kalnins2007fine,kalnins2011laplace,kalnins2007algebraicvarieties3d}.
The main result in this paper is the description of such a foliation, that provides a complete classification for the three-dimensional maximum-parameter case.
These results are achieved without having to resort to separation of variables and it is clear how such techniques could be applied to the higher-dimensional second-order systems.
It is envisaged that the techniques used could also be applied to systems with higher-order symmetries, for which no general structure results are known.

\subsection{Varieties Classifying Superintegrable Systems}

The use of algebraic varieties and algebraic integrability conditions has appeared in the study of integrable and superintegrable systems. For example,
the space of Killing tensors on a constant sectional curvature manifold has a natural description as an algebraic variety \cite{schobel2012variety,schobel2012algebraic}.

The use of algebraic varieties in this paper mirrors that of Ref.~\cite{kalnins2007algebraicvarieties2d} where the two-dimensional second-order Euclidean superintegrable systems were reclassified using the integrability conditions to create an algebraic variety such that \emph{``each point on the variety corresponds to a superintegrable system.
The Euclidean group $E(2,C)$ acts on the variety such that two points determine the same superintegrable system if and only if they lie on the same leaf of the foliation''}.
Specifically, the potentials for each system are governed by a system of linear PDEs, with unspecified coefficient functions.
The integrability conditions for the coefficient functions in this system of second-order linear PDEs determine the algebraic variety, which is foliated by the action of the Euclidean group.
The leaves of the foliation, i.e.\ the equivalence classes, are described by a set of polynomial ideals.
As a result only polynomial evaluations are necessary to determine the equivalence class of a given system.
This approach is conceptually simple and has been put to use classifying three-dimensional Euclidean systems \cite{kalnins2007algebraicvarieties3d}.
However, a complete list of classifying polynomial ideals was not given and the proof still relied on separation of variables.

The techniques described above were revisited in Ref.~\cite{kalnins2011laplace} in the study of conformally-superintegrable (Laplace-type) systems over flat space.
It was remarked in that paper that the classification of three-dimensional maximum-parameter Laplace-type systems could be accomplished by studying how the non-linear action of the conformal group foliates a particular 13-dimensional manifold $\mathbb{C}^{13}$.
It was also suggested that, since all systems are St\"ackel equivalent to a superintegrable system on either the plane or the sphere, the classification could be completed by specialising a term in the potential to be the conformal factor for a constant curvature metric.
However, in this paper it is shown that the classification can be completed without specifying a particular metric.
It had been noted earlier that the 13-dimensional manifold splits naturally into the direct sum $\mathbb{C}^3\oplus\mathbb{C}^3\oplus\mathbb{C}^7$, but the fact that the conformal group acts transitively and independently on the 6-dimensional component $\mathbb{C}^3\oplus\mathbb{C}^3$ was overlooked.
With this observation the task can be simplified to just classifying the orbits on the remaining component $\mathbb{C}^7$.

Under the action of a conformal change of variables, the aforementioned 7-dimensional manifold, coming initially from an $SO(3,\mathbb{C})$ representation, can be considered to be the coefficients of a sextic polynomial in one variable acted on by fractional-linear transformations.
Basic results from classical invariant theory show that the orbits under this action are uniquely identified by two types of absolute invariants: the root multiplicities of the sextic and the multi-ratios\footnote{The multi-ratio is equivalent to the standard cross-ratio by a permutation of the indices.} between the roots.
The polynomial identified with our $SO(3,\mathbb{C})$ representation is determined locally up to projective linear transformations, essentially as an element of $\mathbb{CP}^6$.
Translation of the regular point allows us to move between projectively inequivalent sextics and the full classification is achieved by determining the orbits of this action as Zariski open subsets of algebraic varieties, and involves a mixture of classical invariant theory and algebraic-geometry.
The classification that is obtained proves the list of known maximum-parameter potentials is complete.

The calculations in this paper were done using Gr\"obner bases with the assistance of the computer algebra packages Maple\footnote{"Maple is a trademark of Waterloo Maple Inc."} and Singular\cite{DGPS}.

\section{Conformally-Superintegrable Systems}\label{section:conformally_superintegrable_systems}

We consider a classical system with a natural Hamiltonian of the form
\[
 H=\frac{p_{x_1}^2+p_{x_2}^2+p_{x_3}^2}{\lambda(\mathbf{x})}+V(\mathbf{x})
\]
where the coordinates are $\mathbf{x}=(x_1,x_2,x_3)$ and the generalised momenta are $\mathbf{p}=(p_{x_1},p_{x_2},p_{x_3})$.
Following the work of Ref.~\cite{kalnins2011laplace}, we are interested in systems possessing second-order conformal symmetries; that is, functions of the form
\begin{equation}
 L(\mathbf{x},\mathbf{p})=\sum_{i,j=1}^{3} a^{ij}(\mathbf{x})p_{x_i}p_{x_j} + W(\mathbf{x})\label{eq:the_conformal_constant_L}
\end{equation}
such that the Poisson bracket of the Hamiltonian $H$ with a conformal symmetry $L$ gives 
\[
 \left\{H,L\right\}_{PB}=\rho_{L} H,
\]
where $\rho_{L}(\mathbf{x},\mathbf{p})$ is polynomial in the momentum.
In the case where $L$ is a second-order conformal symmetry, $\rho_{L}$ will necessarily be linear in the momentum.
Additionally, if $\rho_L\equiv0$ then $L$ is a will be called a \emph{true symmetry} of $H$.

Any function of the form $R(\mathbf{x},\mathbf{p})H(\mathbf{x},\mathbf{p})$ is a conformal symmetry of $H$ and so to avoid these trivialities, a conformal symmetry will only be identified up to the addition of a multiple of the Hamiltonian.
It's worth noting that this identification makes the Hamiltonian $H$ equivalent to zero.

We will call an $n$-dimensional system \emph{maximally} conformally superintegrable if it possesses $2n-2$  functionally-independent and inequivalent conformal symmetries.

Along the hypersurface $H=0$ these $2n-2$ conformal symmetries become true-symmetries and thus are constant along the trajectories.
Including the Hamiltonian, these $2n-1$ constants of the motion will be sufficient to solve for $(\mathbf{x},\mathbf{p})$ analytically as a one-parameter trajectory.

\subsection{St\"ackel Equivalence (Conformal Classes of systems)}

The classification of conformally-superintegrable systems can be simplified by only considering Hamiltonians which are not equivalent up to a scaling factor.
This means considering St\"ackel classes of conformally equivalent systems.

\begin{lemma}\label{lemma:all_conformally_superintegrable_sytems_are_over_flat_space}
 Every conformally-superintegrable system over a conformally-flat space can be conformally scaled to a conformally-superintegrable system over flat space.
\end{lemma}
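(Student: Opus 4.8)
The plan is to exploit the fact that on a conformally flat space one can choose local coordinates in which the metric is a scalar multiple of the Euclidean metric, and then simply rescale the Hamiltonian by that scalar factor. First I would record that, by the definition of conformal flatness, about any point there are coordinates $\mathbf{x}=(x_1,x_2,x_3)$ in which the metric takes the form $g_{ij}=\lambda(\mathbf{x})\delta_{ij}$ with $\lambda$ a nowhere-vanishing function, so that the Hamiltonian is already in the displayed form $H=\lambda^{-1}(p_{x_1}^2+p_{x_2}^2+p_{x_3}^2)+V$. Set $\tilde{H}=\lambda H=p_{x_1}^2+p_{x_2}^2+p_{x_3}^2+\lambda V$; this is a natural Hamiltonian over flat space with potential $\tilde{V}=\lambda V$, and the hypersurfaces $H=0$ and $\tilde{H}=0$ coincide.

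Next I would check that every conformal symmetry $L$ of $H$ is again a conformal symmetry of $\tilde{H}$. Using the Leibniz rule for the Poisson bracket together with $\{H,L\}_{PB}=\rho_{L}H$,
\[
 \{\tilde{H},L\}_{PB}=\lambda\{H,L\}_{PB}+H\{\lambda,L\}_{PB}=\bigl(\rho_{L}+\lambda^{-1}\{\lambda,L\}_{PB}\bigr)\tilde{H}.
\]
Since $\lambda$ depends only on $\mathbf{x}$ and $L$ is quadratic in the momenta, $\{\lambda,L\}_{PB}$ is linear in the momenta, so $\tilde{\rho}_{L}:=\rho_{L}+\lambda^{-1}\{\lambda,L\}_{PB}$ is again polynomial (indeed linear) in the momenta; thus $L$ is a conformal symmetry of $\tilde{H}$. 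Because multiplication by the nowhere-zero function $\lambda$ is a bijection on phase-space functions and carries multiples of $H$ to multiples of $\tilde{H}$, a maximal set of $2n-2$ functionally independent, inequivalent conformal symmetries of $H$ maps to such a set for $\tilde{H}$; hence $\tilde{H}$ is (maximally) conformally superintegrable over flat space.

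The argument is essentially a one-line computation once the metric has been put in the form $\lambda\delta_{ij}$; the only points requiring care are that the rescaling factor is nowhere zero (so that equivalence classes, functional independence, and the count of symmetries are preserved) and that the new momentum-polynomial $\tilde{\rho}_{L}$ retains the required form, which is where the hypothesis that $L$ is second order is used. So the "hard part" here is really just the bookkeeping of these structural conditions rather than any substantive obstacle; the geometric content is entirely carried by the existence of conformally flat (isothermal-type) coordinates, which I would simply cite.
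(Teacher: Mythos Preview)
Your argument is correct and follows essentially the same route as the paper: rescale $H$ by the conformal factor $\lambda$ and use the Leibniz rule to obtain $\{\tilde H,L\}_{PB}=(\rho_L+\lambda^{-1}\{\lambda,L\}_{PB})\tilde H$, observing that the new factor is still polynomial in the momenta. You actually add a small amount of useful detail the paper omits (why $\{\lambda,L\}_{PB}$ is linear in the momenta, and why functional independence and inequivalence of the symmetries survive the rescaling), but the core idea and computation are identical.
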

\begin{proof}
 By hypothesis the Hamiltonian is of the form
 \begin{equation}
  H=\frac{p_{x_1}^2+p_{x_2}^2+p_{x_3}^2}{\lambda}+V,
 \end{equation}
and possesses conformal constants $L$ which satisfy
 \begin{equation}
  \left\{H,L\right\}_{PB}=\rho_L H.
 \end{equation}

Scaling the Hamiltonian by the conformal factor $\lambda$ gives the new Hamiltonian
\begin{align}
 \widetilde{H}=\lambda H=p_{x_1}^2+p_{x_2}^2+p_{x_3}^2+\lambda V.\label{eq:conformal_scaling_of_the_hamiltonian}
\end{align}
The Poisson-Bracket of this new Hamiltonian and the original conformal symmetries is
 \begin{align}
 \left\{\widetilde{H} ,L\right\}_{PB}
         &=\left\{\lambda H ,L\right\}_{PB} \nonumber\\
         &=\lambda\left\{ H ,L\right\}_{PB}+H\left\{\lambda ,L\right\}_{PB} \nonumber\\
         &=\lambda\rho_L H+H\left\{\lambda ,L\right\}_{PB} \nonumber\\
         &=\left(\rho_L +\frac{\left\{\lambda ,L\right\}_{PB}}{\lambda}\right)\lambda H \nonumber\\
         &=\left(\rho_L +\frac{\left\{\lambda ,L\right\}_{PB}}{\lambda}\right)\widetilde{H}.
 \end{align}
The factor 
\[
  \widetilde{\rho}_L=\rho_L +\frac{\left\{\lambda ,L\right\}_{PB}}{\lambda}
\]
is polynomial in the momentum and hence the $L$'s are also a conformal symmetries of $\widetilde{H}$, clearly a Hamiltonian over flat space.
\end{proof}

So henceforth the conformally-superintegrable systems will be assumed to be over flat space, conformally scaling if necessary.
There do exist superintegrable systems which are not St\"ackel equivalent to conformally flat systems \cite{kalnins2013nonconformallyflatsuperintegrability}, but these are not the subject of this paper. 

Note that a superintegrable system is also a conformally-superintegrable system with conformal symmetries $L$ for which $\rho_L=0$, so an immediate consequence of lemma~\ref{lemma:all_conformally_superintegrable_sytems_are_over_flat_space} is that any superintegrable system over a conformally-flat space is equivalent (by conformal scaling) to a conformally-superintegrable one on flat space.

The following theorem shows this procedure can be reversed, specifically a conformally-superintegrable system
can be taken to a superintegrable system, if the Hamiltonian is rescaled by a conformal factor that is chosen to be
a term in the potential. In the context of second-order superintegrable systems this procedure is the known as
either the coupling-constant-metamorphosis (CCM) or the St\"ackel transform. A discussion of similarities 
(and more importantly, differences) between the CCM and the St\"ackel transform can be found in Ref.~\cite{post:265}.

\begin{theorem}\label{thm:conformally_superintegrable_to_superintegrable_via_staeckel_transform}
 If $H=H_0+\alpha U$ is a Hamiltonian with a parameter $\alpha$, and with a conformal symmetry $L(\alpha)=L_0+\alpha W_U$, then the new Hamiltonian $\widetilde{H}=\dfrac{H}{U}$ has true symmetry $L(-\widetilde{H})$. 
\end{theorem}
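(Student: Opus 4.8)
The plan is to prove this by the coupling-constant-metamorphosis substitution, carried out directly at the level of Poisson brackets. By hypothesis the relation $\{H(s),L(s)\}_{PB}=\rho_L(s)\,H(s)$ holds for the one-parameter families $H(s)=H_0+sU$ and $L(s)=L_0+sW_U$, with $\rho_L(s)$ polynomial in the momenta; since the parameter $s$ is an inert scalar and the bracket differentiates only in $(\mathbf{x},\mathbf{p})$, this is a polynomial identity in $s$ into which any function of the phase-space variables may be substituted.

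First I would substitute $s\mapsto-\widetilde H$ into that identity, reading the substitution as: expand the left-hand side first, as the explicit $s$-polynomial $\{H_0,L_0\}+s\big(\{H_0,W_U\}+\{U,L_0\}\big)+s^{2}\{U,W_U\}$ with $s$ held constant, and only then replace $s$ by $-\widetilde H$ (substituting before bracketing would introduce chain-rule terms not controlled by the hypothesis). Since $\widetilde H$ is, up to an inessential additive constant, the quotient $H_0/U$, the factor $H_0+sU$ on the right-hand side becomes $H_0-\widetilde H\,U\equiv0$, so the substitution annihilates the right-hand side — in particular the $\rho_L$ term — leaving a single scalar relation $(\ast)$ among $\{H_0,L_0\}$, $\{H_0,W_U\}$, $\{U,L_0\}$, $\{U,W_U\}$ and $\widetilde H$.

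Next I would expand the quantity to be shown to vanish. Writing $L(-\widetilde H)=L_0-\widetilde H W_U$ and using the Leibniz rule together with $\{\widetilde H,\widetilde H\}=0$ gives $\{\widetilde H,L(-\widetilde H)\}_{PB}=\{\widetilde H,L_0\}-\widetilde H\{\widetilde H,W_U\}$; then the quotient rule $\{H/U,F\}_{PB}=\tfrac1U\{H,F\}_{PB}-\tfrac{H}{U^{2}}\{U,F\}_{PB}$ rewrites this entirely in terms of the same four brackets and $\widetilde H$, divided by powers of $U$. I expect this to reduce to exactly $\tfrac1U$ times the left-hand side of $(\ast)$, so that $(\ast)$ forces $\{\widetilde H,L(-\widetilde H)\}_{PB}=0$; since no residual multiple of $\widetilde H$ survives, $L(-\widetilde H)$ is a genuine true symmetry of $\widetilde H$ rather than merely a conformal one.

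I expect the main obstacle to be bookkeeping rather than anything conceptual: one must keep careful track of the $1/U$ and $1/U^{2}$ prefactors and verify that every cross-term produced by differentiating $1/U$ — in particular the terms proportional to $\{H,U\}_{PB}$ and to $\{U,W_U\}_{PB}$ — cancels, leaving precisely the combination appearing in $(\ast)$. The one genuinely delicate point is the parameter substitution $s\mapsto-\widetilde H$ itself: one has to be sure it is performed after expanding the bracket, and that $-\widetilde H$ is indeed the parameter value at which the Hamiltonian factor on the right collapses.
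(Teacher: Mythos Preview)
Your argument is correct and is essentially the same coupling-constant-metamorphosis computation as the paper's; the only difference is organisational. The paper first states a general chain-rule identity for Poisson brackets with a function-valued parameter, computes $\{\widetilde H,L(\alpha)\}$ with $\alpha$ still inert to obtain the closed form $-\tfrac{\widetilde H+\alpha}{U}\{U,L(\alpha)\}+\tfrac{\rho(H+\alpha U)}{U}$, and then substitutes $\alpha=-\widetilde H$ (the chain-rule term $\partial_\alpha L\,\{\widetilde H,\widetilde H\}$ vanishing automatically). You instead substitute into the hypothesis first to produce the scalar relation~$(\ast)$ and then expand $\{\widetilde H,L(-\widetilde H)\}$ separately and match; your expansion indeed collapses to $\tfrac1U$ times the left side of~$(\ast)$, exactly as you anticipated, so the bookkeeping worry you flag is harmless. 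Your remark that $\widetilde H$ agrees with $H_0/U$ up to an additive constant is the same identification the paper's proof makes implicitly (its proof uses $H$ where the statement writes $H_0$).
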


\begin{proof} 
  This proof is almost identical to the proof of theorem~1 from Ref.~\cite{1751-8121-43-3-035202}.
  Firstly note that, given functions of the form $G(\mathbf{x},\mathbf{p})$ and $F(a,\mathbf{x},\mathbf{p})$, where $a=\tau(\mathbf{x},\mathbf{p})$, then
  \begin{align}
     \{F,G\}=
          \left[\{F(a,\mathbf{x},\mathbf{p}),G(\mathbf{x},\mathbf{p})\}\right]_{a=\tau(\mathbf{x},\mathbf{p})}
         +\left[\partial_a F(a,\mathbf{x},\mathbf{p})\right]_{a=\tau(\mathbf{x},\mathbf{p})}\{\tau(\mathbf{x},\mathbf{p}),G(\mathbf{x},\mathbf{p})\}.
         \label{eq:chain_rule_for_the_poisson_braket}
  \end{align}
  Consider the conformal symmetry $L$ which, by hypothesis, satisfies a relation of the form
  \[
     \{H+\alpha U, L(\alpha)\}=\rho(H+\alpha U),
  \]
  and so
  \begin{align*}
     \{H,L(\alpha)\} &= -\alpha\{U,L(\alpha)\}+\rho(H+\alpha U).
  \end{align*}
  Using these it can be shown
  \begin{align}
     \{\widetilde{H},L(\alpha)\}
        &= \left\{\frac{H}{U},L(\alpha)\right\} \nonumber\\
        &= -\frac{H}{U^2}\left\{U,L(\alpha)\right\}+\frac{\left\{H,L(\alpha)\right\}}{U} \nonumber\\
        &= -\frac{H}{U^2}\left\{U,L(\alpha)\right\}+\frac{-\alpha{U,L(\alpha)}+\rho(H+\alpha U)}{U} \nonumber\\
        &= -\frac{\widetilde{H}+\alpha}{U}\left\{U,L(\alpha)\right\}+\frac{\rho(H+\alpha U)}{U}.\label{eq:poisson_commutator_of_the_conformally_rescaled_hamiltonian_and_the_original_constant}
  \end{align}
  So, using \eqref{eq:chain_rule_for_the_poisson_braket} and \eqref{eq:poisson_commutator_of_the_conformally_rescaled_hamiltonian_and_the_original_constant}, we find
  \begin{align*}
    \{\widetilde{H},L(-\widetilde{H})\}
       &=  \left[ \partial_\alpha L(\alpha)\left\{\widetilde{H},\widetilde{H}\right\}
                 -\frac{\widetilde{H}+\alpha}{U}\left\{U,L(\alpha)\right\}
                 +\frac{\rho(H+\alpha U)}{U}
           \right]_{\alpha=-\widetilde{H}} \\
       &= \frac{\rho(H-\widetilde{H}U)}{U} \\
       &= 0.
  \end{align*}
  Thus $L(-\widetilde{H})$ is a true symmetry for the transformed Hamiltonian.
\end{proof}

The proof of theorem~\ref{thm:conformally_superintegrable_to_superintegrable_via_staeckel_transform} is almost the same as the proof of the St\"ackel transform between superintegrable systems, the subtle difference however is the reason behind $\rho(H-\widetilde{H}U)$ vanishing.

Note that the conformal rescaling makes the parameter $\alpha$ into an additive constant, such an additive constant was lacking from the definition of a conformally-superintegrable system.
Also note that if $L(\alpha)$ is a trivial conformal symmetry of the form $F(\mathbf{x},\mathbf{p})(H_0+\alpha U)$ then $L(-\widetilde{H})=0$.

\subsection{Maximum parameter (Nondegenerate) Potentials}

\subsubsection{Bertrand-Darboux Equations}

As was shown by Kalnins \textit{et at} \cite{kalnins(2005):103507,kalnins2011laplace} any potential with 5 
functionally-linearly-independent second-order conformal symmetries (including the Hamiltonian) satisfies a set of linear PDEs of the form
\begin{align}
 \frac{\partial^2 V}{\partial x_2^2} &=
                                  \frac{\partial^2 V}{\partial x_1^2}
         + A^{22}_{1}(\mathbf{x}) \frac{\partial   V}{\partial x_1  }
         + A^{22}_{2}(\mathbf{x}) \frac{\partial   V}{\partial x_2  }
         + A^{22}_{3}(\mathbf{x}) \frac{\partial   V}{\partial x_3  }
         + A^{22}_{0}(\mathbf{x}) V,\nonumber \\%\label{V22} \\
 \frac{\partial^2 V}{\partial x_3^2} &=
                                  \frac{\partial^2 V}{\partial x_1^2}
         + A^{33}_{1}(\mathbf{x}) \frac{\partial   V}{\partial x_1  }
         + A^{33}_{2}(\mathbf{x}) \frac{\partial   V}{\partial x_2  }
         + A^{33}_{3}(\mathbf{x}) \frac{\partial   V}{\partial x_3  }
         + A^{33}_{0}(\mathbf{x}) V,\nonumber \\%\label{V33} \\
 \frac{\partial^2 V}{\partial x_1\partial x_2} &=
           \hphantom{\frac{\partial^2 V}{\partial x_1^2}+\vphantom{a}}
           A^{12}_{1}(\mathbf{x}) \frac{\partial   V}{\partial x_1  }
         + A^{12}_{2}(\mathbf{x}) \frac{\partial   V}{\partial x_2  }
         + A^{12}_{3}(\mathbf{x}) \frac{\partial   V}{\partial x_3  }
         + A^{12}_{0}(\mathbf{x}) V,\nonumber \\%\label{V12} \\
 \frac{\partial^2 V}{\partial x_1\partial x_3} &=
           \hphantom{\frac{\partial^2 V}{\partial x_1^2}+\vphantom{a}}
           A^{13}_{1}(\mathbf{x}) \frac{\partial   V}{\partial x_1  }
         + A^{13}_{2}(\mathbf{x}) \frac{\partial   V}{\partial x_2  }
         + A^{13}_{3}(\mathbf{x}) \frac{\partial   V}{\partial x_3  }
         + A^{13}_{0}(\mathbf{x}) V,\nonumber \\%\label{V13} \\
 \frac{\partial^2 V}{\partial x_2\partial x_3} &=
           \hphantom{\frac{\partial^2 V}{\partial x_1^2}+\vphantom{a}}
           A^{23}_{1}(\mathbf{x}) \frac{\partial   V}{\partial x_1  }
         + A^{23}_{2}(\mathbf{x}) \frac{\partial   V}{\partial x_2  }
         + A^{23}_{3}(\mathbf{x}) \frac{\partial   V}{\partial x_3  }
         + A^{23}_{0}(\mathbf{x}) V.\label{eq:the_PDE_governing_V}%\label{V23} 
\end{align}
The systems being studied will be assumed to depends on five parameters (maximum-parameter), meaning the values of $\frac{\partial^2 V}{\partial x_1^2}, \frac{\partial V}{\partial x_1}, \frac{\partial V}{\partial x_2}, \frac{\partial V}{\partial x_3}$ and $V$ can be freely specified at any regular point in the system.
The maximum-parameter assumption allowed Kalnins \textit{et at} to prove their ${(4 \implies 5)}$ theorem, showing that there are sufficiently many second-order conformal symmetries to allow for the value of $a^{ij}$'s in \eqref{eq:the_conformal_constant_L} to be specified arbitrarily.
This freedom puts fairly tight integrability conditions on the coefficient functions $A^{ij}_k$ in \eqref{eq:the_PDE_governing_V}, and allows the five $A^{ij}_0$ to be written as quadratics in the $A^{ij}_k, k\neq0$.
The exact expression for these quadratics can be found in Ref.~\cite{kalnins2011laplace}, but their exact form is unimportant for this analysis.
The remaining fifteen variables $A^{ij}_k, k\neq0$ can be shown to satisfy the following 5 linear equations
\begin{align}
  A^{22}_1-A^{12}_2 &= A^{33}_1-A^{13}_3, \nonumber\\
  A^{23}_3-A^{12}_1 &= A^{33}_2, \nonumber\\
  A^{23}_2-A^{13}_1 &= A^{22}_3, \nonumber\\
           A^{12}_3 &= A^{13}_2, \nonumber\\
           A^{12}_3 &= A^{23}_1. 
\end{align}
These reduce our fifteen coefficient functions down to a set of ten, which, for the sake of symmetry, can be parameterised as follows
\begin{equation}
  \begin{array}{lll}
    A^{22}_1 = 6 S^1+2 R^{12}_2+R^{13}_3,  & A^{22}_2 = -6 S^2-2 R^{12}_1-R^{23}_3,   & A^{22}_3 = R^{23}_2-R^{13}_1,           \\
    A^{33}_1 = 6 S^1+R^{12}_2+2 R^{13}_3,  & A^{33}_2 = -R^{12}_1+R^{23}_3,           & A^{33}_3 = -6 S^3-R^{23}_2-2 R^{13}_1,  \\
    A^{12}_1 = R^{12}_1-3 S^2,             & A^{12}_2 = R^{12}_2-3 S^1,               & A^{12}_3 = Q^{123},                     \\
    A^{13}_1 = R^{13}_1-3 S^3,             & A^{13}_2 = Q^{123},                      & A^{13}_3 = R^{13}_3-3 S^1,              \\
    A^{23}_1 = Q^{123},                    & A^{23}_2 = R^{23}_2-3 S^3,               & A^{23}_3 = R^{23}_3-3 S^2.
  \end{array}
\end{equation}
If the 10 new variable names
\begin{align}
(S^{1},S^{2},S^{3},R^{12}_{1},R^{12}_{2},R^{13}_{1},R^{13}_{3},R^{23}_{2},R^{23}_{3},Q^{123}), \label{QRS_variable_names}
\end{align}
are considered symmetric in the raised indices, then a permutation of the coordinates $x_i$ just corresponds to an equivalent permutation of the indices in $S^i,R^{ij}_i,Q^{123}$, a symmetry that the coefficient functions in equations \eqref{eq:the_PDE_governing_V} were lacking.
Another motivation for this parameter choice is, as will be shown in section \eqref{sec:rotation_apapted_variables}, the sets of variables $\{\mathbf{Q},\mathbf{R}\}$ and $\{\mathbf{S}\}$ each carry a separate irreducible representations of the $SO(3,\mathbb{C})$ Lie group.

The rest of the integrability conditions allow all the derivatives of the $Q,R,S$'s to be calculated.
These derivatives are quadratic in $Q,R,S$ and a sample of which are
\begin{multline}\label{eq:dR12_1/dx1}
 \frac{\partial {R}^{12}_1}{\partial x_1} = -\frac{2}{3} {R}^{12}_2 {R}^{23}_3+\frac{2}{3} {R}^{13}_3 {R}^{23}_3+\frac{4}{3} {Q}^{123} {R}^{23}_2+\frac{5}{3} {Q}^{123} {R}^{13}_1\\
-{R}^{12}_1 {R}^{13}_3-{R}^{12}_1 {S}^{1}+\left({R}^{13}_3+3 {R}^{12}_2\right) {S}^{2} +2 {Q}^{123} {S}^{3},
\end{multline}
\begin{multline}\label{eq:dR12_1/dx2}
  \frac{\partial {R}^{12}_1}{\partial x_2} = \frac{3}{5} {R}^{12}_2 {R}^{13}_3-\frac{1}{15} {R}^{13}_1 {R}^{23}_2-\frac{11}{15} {R}^{12}_1 {R}^{23}_3 \\
 +\frac{8}{15} \left({R}^{13}_1\right)^2+\frac{1}{5} \left({R}^{12}_1\right)^2-\frac{4}{5} \left({R}^{23}_2\right)^2+\frac{8}{15} \left({R}^{13}_3\right)^2 +\frac{1}{5} \left({R}^{12}_2\right)^2-\frac{4}{5} \left({R}^{23}_3\right)^2\\
 +\frac{2}{15} \left({Q}^{123}\right)^2
 -\left({R}^{13}_3+3 {R}^{12}_2\right) {S}^{1}-{R}^{12}_1 {S}^{2}+{R}^{13}_1 {S}^{3},
\end{multline}
\begin{multline}\label{eq:dR12_1/dx3}
\frac{\partial {R}^{12}_1}{\partial x_3} =
    -\frac{1}{3} {Q}^{123} {R}^{12}_2
        -\frac{1}{3} {Q}^{123} {R}^{13}_3
        +\frac{1}{3} {R}^{23}_2 {R}^{12}_1
        +\frac{1}{3} {R}^{23}_3 {R}^{13}_1 \\
    -2 {Q}^{123} {S}^{1}
        -{R}^{13}_1 {S}^{2}
        -{R}^{12}_1 {S}^{3},
\end{multline}
\begin{multline}\label{eq:dS1/dx1}
 \frac{\partial {S}^{1}}{\partial x_1} = -\frac{17}{90} {R}^{12}_2 {R}^{13}_3+\frac{1}{30} {R}^{13}_1 {R}^{23}_2+\frac{1}{30} {R}^{12}_1 {R}^{23}_3 \\
-\frac{7}{45} \left({R}^{13}_3\right)^2 +\frac{1}{15}\left({R}^{23}_3\right)^2-\frac{7}{45}\left({R}^{12}_1\right)^2-\frac{11}{90} \left({Q}^{123}\right)^2-\frac{7}{45} \left({R}^{13}_1\right)^2-\frac{7}{45}\left( {R}^{12}_2\right)^2 \\
+\frac{1}{15}\left({R}^{23}_2\right)^2+\frac{1}{2}\left({S}^{2}\right)^2+\frac{1}{2} \left({S}^{3}\right)^2 -\frac{1}{2} \left({S}^{1}\right)^2,
\end{multline}
\begin{multline}\label{eq:dS1/dx2}
 \frac{\partial {S}^{1}}{\partial x_2} = -\frac{1}{9} {R}^{13}_3 {R}^{23}_3-\frac{2}{9} {Q}^{123} {R}^{23}_2+\frac{1}{9} {R}^{12}_1 {R}^{13}_3+\frac{1}{9} {R}^{12}_2 {R}^{23}_3-\frac{2}{9} {Q}^{123} {R}^{13}_1-{S}^{1} {S}^{2},
\end{multline}
\begin{multline}\label{eq:dQ123/dx1}
 \frac{\partial {Q}^{123}}{\partial x_1} = \frac{2}{3} {R}^{13}_1 {R}^{12}_1-\frac{1}{3} {R}^{23}_3 {R}^{13}_1 +{Q}^{123} {R}^{13}_3-\frac{1}{3} {R}^{23}_2 {R}^{12}_1 \\
 +{Q}^{123} {R}^{12}_2-{Q}^{123} {S}^{1}+\left({R}^{23}_2 -{R}^{13}_1\right) {S}^{2}+\left({R}^{23}_3 -{R}^{12}_1\right) {S}^{3}.
\end{multline}
All 30 derivatives can be determined from the six shown above by permuting the various indices that appear.

Quite remarkably, as was mentioned in Ref.~\cite{kalnins2011laplace}, the integrability conditions for \eqref{eq:dR12_1/dx1}-\eqref{eq:dQ123/dx1} are identically satisfied.
Therefore any 10-tuple of complex numbers $(\mathbf{Q},\mathbf{R},\mathbf{S})\in\mathbb{C}^{10}$ will give rise to a unique maximum-parameter, second-order conformally superintegrable system over flat space.
The task now is to classify the orbits of $(\mathbf{Q},\mathbf{R},\mathbf{S})$ under the action of the conformal group.
We will show below that, under the local action of the conformal group (i.e.\ excluding translation of the regular point), this problem is solved by considering the root structures of a certain $6$th degree polynomial in one variable $p(z)$, up to fractional linear transformation (M\"obius transformation).

The classification of these root structures will be succinctly described by vanishing of certain covariants of the aforementioned sextic $p(z)$.
The non-local problem is then solved by using these covariants to determine polynomial ideals corresponding to algebraic sets (with some Zariski closed subsets removed) on which the conformal group acts transitively.

\section{Action of the Conformal Group}
Consider the effect of the conformal group on the potential.
Explicitly, this means a change of coordinates for which the Hamiltonian becomes
\[
  H=p_{x_1}^2+p_{x_2}^2+p_{x_3}^2+V(\mathbf{x}) =\frac{p_{u_1}^2+p_{u_2}^2+p_{u_3}^2}{\mu(\mathbf{u})}+V(\mathbf{u})
\]
where $ds^2= \mu(\mathbf{u}) \left({du_1}^2+{du_2}^2+{du_3}^2\right)$ is the flat-space metric expressed in our new coordinate system.
Conformal scaling by $\mu(\mathbf{u})$, as per \eqref{eq:conformal_scaling_of_the_hamiltonian}, gives the flat-space conformally-superintegrable Hamiltonian
\begin{align*}
 \widetilde{H}=H_0+ \mu V .%\label{conformally_flattened_potential}
\end{align*}
If we consider a fixed regular point within our system (i.e.\ excluding translation of the regular point) there are three essential types of non-trivial transformations: Rotation, dilations and M\"obius transformations, i.e.\ conjugation of an inversion in the sphere with a translation. 

\subsection{Inversion in the sphere}\label{sec:inversion_in_the_sphere}

A notable discrete conformal transformation is given by inversion in the unit sphere.
That is, a change of variables of the form 
\begin{align}
	x_i = \frac{u_i}{{u_1}^2+{u_2}^2+{u_3}^2}.\label{transformation_inversion_in_the_sphere}
\end{align}
Under this change of variables the Hamiltonian becomes
\[
 H=\left({u_1}^2+{u_2}^2+{u_3}^2\right)^2\left(p_{u_1}^2+p_{u_2}^2+p_{u_3}^2\right)+V(u_1,u_2,u_3)
\]
which, conformally scaling by $(\mathbf{u}.\mathbf{u})^{-2}$, gives the conformally equivalent Hamiltonian
\[
 \widetilde{H}=\dfrac{H}{\left({u_1}^2+{u_2}^2+{u_3}^2\right)^2}=\left(p_{u_1}^2+p_{u_2}^2+p_{u_3}^2\right)+\frac{V(u_1,u_2,u_3)}{\left({u_1}^2+{u_2}^2+{u_3}^2\right)^2}.
\]
For this conformally-superintegrable Hamiltonian $\widetilde{H}$ we can derive the 10 coefficient functions \eqref{QRS_variable_names} in terms of the original coefficient functions.
Namely, under the action of \eqref{transformation_inversion_in_the_sphere}, these become
\begin{multline}
\widetilde{S}^{1} = -\dfrac{u_{1}^2-u_{2}^2-u_{3}^2}{\left(u_{1}^2+u_{2}^2+u_{3}^2\right)^2} S^{1}
	 -\dfrac{2 u_{1} u_{2}}{\left(u_{1}^2+u_{2}^2+u_{3}^2\right)^2} S^{2}
	 -\dfrac{2 u_{3} u_{1}}{\left(u_{1}^2+u_{2}^2+u_{3}^2\right)^2} S^{3}
	 +\dfrac{2 u_{1}}{\left(u_{1}^2+u_{2}^2+u_{3}^2\right)},
	\label{S1_inversion_in_the_sphere}
\end{multline}
\begin{multline}
\widetilde{R}^{12}_{1} = 
	  \dfrac{2 u_{2} u_{1} \left(6 u_{3}^2 u_{2}^2-2 u_{2}^2 u_{1}^2+u_{2}^4+u_{1}^4-10 u_{3}^2 u_{1}^2+5 u_{3}^4\right)}{\left(u_{1}^2+u_{2}^2+u_{3}^2\right)^4} R^{13}_{3} \\
	+\dfrac{2 u_{2} u_{1} \left(2 u_{3}^2 u_{2}^2-u_{3}^4-2 u_{3}^2 u_{1}^2-10 u_{2}^2 u_{1}^2+3 u_{1}^4+3 u_{2}^4\right)}{\left(u_{1}^2+u_{2}^2+u_{3}^2\right)^4} R^{12}_{2} \\
	+\dfrac{8 u_{3} u_{1}^2 u_{2} \left(2 u_{3}^2-2 u_{2}^2+u_{1}^2\right)}{\left(u_{1}^2+u_{2}^2+u_{3}^2\right)^4} R^{23}_{2} \\
	+\dfrac{\left(-u_{2}^6-15 u_{1}^4 u_{2}^2+15 u_{1}^2 u_{2}^4+u_{1}^6+u_{3}^6-u_{2}^4 u_{3}^2+u_{2}^2 u_{3}^4-u_{1}^2 u_{3}^4+6 u_{1}^2 u_{3}^2 u_{2}^2-u_{1}^4 u_{3}^2\right)}{\left(u_{1}^2+u_{2}^2+u_{3}^2\right)^4} R^{12}_{1} \\
	-\dfrac{2 u_{3} u_{2} \left(u_{3}^4-10 u_{3}^2 u_{1}^2+2 u_{3}^2 u_{2}^2-6 u_{2}^2 u_{1}^2+5 u_{1}^4+u_{2}^4\right)}{\left(u_{1}^2+u_{2}^2+u_{3}^2\right)^4} R^{13}_{1} \\
	-\dfrac{4 u_{1}^2 \left(6 u_{3}^2 u_{2}^2-u_{3}^2 u_{1}^2-u_{2}^4-u_{3}^4+u_{2}^2 u_{1}^2\right)}{\left(u_{1}^2+u_{2}^2+u_{3}^2\right)^4} R^{23}_{3} \\
	+\dfrac{4 u_{3} u_{1} \left(3 u_{2}^4+2 u_{3}^2 u_{2}^2-8 u_{2}^2 u_{1}^2+u_{1}^4-u_{3}^4\right)}{\left(u_{1}^2+u_{2}^2+u_{3}^2\right)^4} Q^{123},
	\label{R12_1_inversion_in_the_sphere}
\end{multline}
\begin{multline}
\widetilde{Q}^{123} = 
	 2 u_{3} u_{1}\dfrac{\left(5 u_{2}^4-u_{3}^4-10 u_{1}^2 u_{2}^2+u_{1}^4\right)}{\left(u_{1}^2+u_{2}^2+u_{3}^2\right)^4} R^{12}_{1}
	+2 u_{3} u_{2}\dfrac{\left(5 u_{1}^4-u_{3}^4-10 u_{2}^2 u_{1}^2+u_{2}^4\right)}{\left(u_{1}^2+u_{2}^2+u_{3}^2\right)^4} R^{12}_{2} \\
	{}
	+2 u_{1} u_{2}\dfrac{\left(5 u_{3}^4-u_{1}^4-10 u_{2}^2 u_{3}^2+u_{2}^4\right)}{\left(u_{1}^2+u_{2}^2+u_{3}^2\right)^4} R^{23}_{2}
	+2 u_{1} u_{3}\dfrac{\left(5 u_{2}^4-u_{1}^4-10 u_{3}^2 u_{2}^2+u_{3}^4\right)}{\left(u_{1}^2+u_{2}^2+u_{3}^2\right)^4} R^{23}_{3} \\
	{} 
	+2 u_{2} u_{3}\dfrac{\left(5 u_{1}^4-u_{2}^4-10 u_{3}^2 u_{1}^2+u_{3}^4\right)}{\left(u_{1}^2+u_{2}^2+u_{3}^2\right)^4} R^{13}_{3}
	+2 u_{2} u_{1}\dfrac{\left(5 u_{3}^4-u_{2}^4-10 u_{1}^2 u_{3}^2+u_{1}^4\right)}{\left(u_{1}^2+u_{2}^2+u_{3}^2\right)^4} R^{13}_{1} \\
	{}
	-\dfrac{\left(-5 u_{2}^2 u_{3}^4-5 u_{1}^2 u_{2}^4+30 u_{1}^2 u_{3}^2 u_{2}^2-5 u_{2}^4 u_{3}^2-5 u_{1}^2 u_{3}^4-5 u_{1}^4 u_{2}^2-5 u_{1}^4 u_{3}^2+u_{1}^6+u_{2}^6+u_{3}^6\right)}{\left(u_{1}^2+u_{2}^2+u_{3}^2\right)^4} Q^{123},
	\label{Q123_inversion_in_the_sphere}
\end{multline}
where, like before, all 10 functions can be determined from the three above by index permutation.

Equation \eqref{S1_inversion_in_the_sphere} is especially important as we will now use it show the conformal group acts transitively on the $S^i$'s.
\begin{theorem}\label{thm:mapping_S1S2S3_to_zero_by_a_local_conformal_transformation}
  Given a regular point in our system there is a local conformal transformation and rescaling that takes the value of the 10 coefficient functions from their original values $(\mathbf{Q}_0,\mathbf{R}_0,\mathbf{S}_0)$ to the values $(\mathbf{Q}_0,\mathbf{R}_0,\mathbf{0})$.
  That is every systems is conformally equivalent to one with $\mathbf{S}_0=0$ at the regular point.
\end{theorem}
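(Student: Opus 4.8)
The plan is to exploit the different natures of the two blocks of coefficient functions: the triple $\mathbf S=(S^1,S^2,S^3)$ behaves like \emph{connection-type} data for the conformal structure, transforming with an inhomogeneous term under conformal rescaling (witness the $2u_i/(u_1^2+u_2^2+u_3^2)$ term in \eqref{S1_inversion_in_the_sphere}), whereas $(\mathbf Q,\mathbf R)$ are assembled algebraically from the pointwise data and transform \emph{tensorially} (formulas \eqref{R12_1_inversion_in_the_sphere}--\eqref{Q123_inversion_in_the_sphere} do not involve $\mathbf S$ at all). The conformal maps that separate these most cleanly are the M\"obius (special conformal) transformations fixing the regular point: such a map has trivial linear part at its fixed point, so it cannot change any quantity built tensorially from the pointwise data there, while its second-order jet is controlled by a free parameter $\mathbf b\in\mathbb C^3$, and this parameter is precisely what drives the inhomogeneous shift of $\mathbf S$.

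Concretely, first translate so that the chosen regular point sits at the origin. Translations are isometries of flat space, the conformal factor is identically $1$, and the ten coefficient functions are merely pulled back, so their values at the regular point are still $(\mathbf Q_0,\mathbf R_0,\mathbf S_0)$. Next apply the special conformal transformation $C_{\mathbf b}$ fixing the origin, written as $C_{\mathbf b}=I\circ T_{\mathbf b}\circ I$, where $I$ is the inversion \eqref{transformation_inversion_in_the_sphere} and $T_{\mathbf b}$ is translation by $\mathbf b$. Using the explicit inversion laws \eqref{S1_inversion_in_the_sphere}, \eqref{R12_1_inversion_in_the_sphere}, \eqref{Q123_inversion_in_the_sphere} (and the triviality of translations) one checks that the reflection-and-scale factors contributed by the two copies of $I$ cancel on the $(\mathbf Q,\mathbf R)$ block, so that $(\widetilde{\mathbf Q},\widetilde{\mathbf R})=(\mathbf Q_0,\mathbf R_0)$ at the origin, while the inhomogeneous term of \eqref{S1_inversion_in_the_sphere} survives as a purely additive change $\widetilde{\mathbf S}=\mathbf S_0+c\,\mathbf b$ there, with $c$ a fixed nonzero constant. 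Equivalently, since $C_{\mathbf b}(\mathbf x)=\mathbf x+O(|\mathbf x|^2)$ with the quadratic term linear in $\mathbf b$, one has $DC_{\mathbf b}|_0=\mathrm{id}$, which forces the tensorial data to be fixed, whereas $(d\log\mu)|_0$ is a nonzero multiple of $\mathbf b$, which produces exactly the affine shift of $\mathbf S$.

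It then remains only to choose $\mathbf b=-\mathbf S_0/c$, which is possible with no genericity hypothesis at all, since $\mathbf b\mapsto c\,\mathbf b$ is a linear isomorphism of $\mathbb C^3$; in particular there is no obstruction when $\mathbf S_0$ is an isotropic vector, in contrast to what would arise if one tried to use a bare inversion (whose linear isotropy is a nontrivial reflection). Composing these maps sends $(\mathbf Q_0,\mathbf R_0,\mathbf S_0)$ to $(\mathbf Q_0,\mathbf R_0,\mathbf 0)$ at the regular point, which is the claim. The main obstacle is bookkeeping rather than conceptual: one must verify directly from \eqref{S1_inversion_in_the_sphere}--\eqref{Q123_inversion_in_the_sphere}, after composing the two inversions with the translation and tracking the image of the regular point, that the reflection/dilation factors on the $(\mathbf Q,\mathbf R)$ block cancel precisely and that the residual action on $\mathbf S$ is a pure translation by $c\,\mathbf b$; the jet-theoretic remark guarantees the outcome, but matching it against the rational expressions is the calculation actually carried out.
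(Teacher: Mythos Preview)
Your argument is correct and a bit different in execution from the paper's. The paper kills one component of $\mathbf S$ at a time: it places the regular point on a coordinate axis, performs a single inversion there, then undoes the overall scaling by a compensating dilation, so that only the chosen $S^i$ is shifted; this is repeated three times with index permutations. You instead translate the regular point to the origin and apply one special conformal map $C_{\mathbf b}=I\circ T_{\mathbf b}\circ I$ fixing it, using that $DC_{\mathbf b}|_0=\mathrm{id}$ forces the tensorial seven-dimensional block to be fixed while the connection-like $\mathbf S$ picks up the affine shift $-\tfrac12\,\nabla\log\mu|_0\propto\mathbf b$. This buys you a one-shot argument with no ``assume $S^3\ne0$'' hypothesis and no index-permutation bookkeeping; the paper's version, by contrast, is entirely self-contained from the explicit formula~\eqref{S1_inversion_in_the_sphere} and never needs to pass through the point at infinity. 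One small point to make explicit in your write-up: the composition $I\circ T_{\mathbf b}\circ I$ cannot literally be tracked through \eqref{S1_inversion_in_the_sphere}--\eqref{Q123_inversion_in_the_sphere} at the origin (the first $I$ sends it to infinity), so the operative version of your argument is the jet-theoretic one---$(\mathbf Q,\mathbf R)$ transform through the Jacobian alone (this is what the homogeneous linear form of \eqref{R12_1_inversion_in_the_sphere}--\eqref{Q123_inversion_in_the_sphere} shows for inversions, and it is obvious for the remaining generators), while $\mathbf S$ transforms with the extra inhomogeneous term $-\tfrac12\,\nabla\log\mu$, computed directly from the $2$-jet of $C_{\mathbf b}$ at $0$.
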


\begin{proof}
  Assume $S^3\neq 0$ at our regular point.
  If we perform and inversion in the sphere via \eqref{transformation_inversion_in_the_sphere} such that the values of the transformed regular point satisfy $u_1=u_2=0,u_3\neq0$ then \eqref{S1_inversion_in_the_sphere}-\eqref{Q123_inversion_in_the_sphere} we get
  \begin{align}
    \widetilde{S}^1= \frac{S^1}{u_3^2},\quad
    \widetilde{S}^2= \frac{S^2}{u_3^2},\quad
    \widetilde{S}^3=-\frac{S^3}{u_3^2}+\frac{2}{u_3}.
  \end{align}
  Dilating by a factor of $\delta=u_3^2$ we find that the $\widehat{Q}^{123},\widehat{R}^{ij}_i$ have returned to their orginal value while the $S^i$'s have become
  \begin{align}
    \widetilde{S}^1=S^1,\quad
    \widetilde{S}^2=S^2,\quad
    \widetilde{S}^3=-S^3+2u_3.
  \end{align}
  Making the choice $u_3=\frac{S^3}{2}$ we now have have $\widetilde{S}^3=0$.
  Permuting the indexes allows the same to be done for $S_1$ and $S_2$, and finally undoing these index permutations returns $\mathbf{Q},\mathbf{R}$ to their original values $\mathbf{Q}_0,\mathbf{R}_0$.
\end{proof}

\subsection{Rotation adapted variables}\label{sec:rotation_apapted_variables}

In the previous section we were able to successfully deal with the variable sets $\{\mathbf{Q},\mathbf{R}\}$ and $\{\mathbf{S}\}$ independently of each other.
One of the reasons this is possible is because the two pairs of variable form distinct rotation representations.
To make this more explicit we will consider the action of the $\mathfrak{so}(3,\mathbb{C})$ Lie algebra on these variables.
Define $J_1,J_2$ and $J_3$ to be the actions of the $\mathfrak{so}(3,\mathbb{C})$ Lie algebra corresponding to an infinitesimal rotation around the $x_1,x_2$ and $x_3$ axes respectively, These satisfy the commutation relations
\[
        \left[J_1,J_2\right]=J_3,
  \quad \left[J_2,J_3\right]=J_1,
  \quad \left[J_3,J_1\right]=J_2,
\]
where $\left[\cdot,\cdot\right]$ is the commutator $\left[x,y\right]=xy-yx$.
Using the operators above we can define the standard set of raising and lowering operators via
\begin{align}
  J_{+}=i J_{1}-J_{2},\quad 
  J_{0}=i J_{3},      \quad
  J_{-}=i J_{1}+J_{2}.\label{raising_lowering_J_operators}
\end{align}
These satisfy the commutation relations
\[
        \left[J_{0},J_{+}\right] =  J_{+},
  \quad \left[J_{+},J_{-}\right] =2 J_{0},
  \quad \left[J_{0},J_{-}\right] = -J_{-}.
\]
As is well known, the operators $J_{+}$ and $J_{-}$ map between the eigenspaces of the $J_{0}$ operator by respectively raising and lowering eigenvalues by $\pm1$.
Using \eqref{raising_lowering_J_operators} we can define a three and a seven-dimensional representation in the span of the functions $\mathbf{Q},\mathbf{R},\mathbf{S}$.
These representations, which are invariant subspaces under the Lie algebra action, are characterised by highest weight eigenvectors with respective eigenvalue $l=+1$ and $l=+3$.
The basis vectors of this representations will be normalised such that the action of \eqref{raising_lowering_J_operators} on an eigenvector $f_m$ with eigenvalue $m$ in a representation with highest-weight $l$ is given by
\begin{align}
  J_{+} f_m&=\sqrt{(l-m)(l+m+1)}f_{m+1}, \nonumber \\
  J_{0} f_m&=m f_m,                      \nonumber \\
  J_{-} f_m&=\sqrt{(l+m)(l-m+1)}f_{m-1}. \label{eigenvector_normalization}
\end{align}
Using these representation we define the rotation adapted variables
\begin{align}
X_{+1} &= i S_2+S_1,     \nonumber \\
X_{ 0} &= -S_3 \sqrt{2}, \nonumber \\
X_{-1} &= i S_2-S_1,     \label{eq:the_X_representation}
\end{align}
and
\begin{align}
Y_{+3} &= R^{12}_{1}+\frac{1}{4} R^{23}_{3}+i\left( R^{12}_{2}+\frac{1}{4} R^{13}_{3}\right), \nonumber\\
Y_{+2} &= \frac{1}{4}\sqrt{6} \left(i\left(R^{13}_{1}-R^{23}_{2}\right)-2 Q^{123}\right),     \nonumber\\
Y_{+1} &= \frac{1}{4}\sqrt{15} \left(R^{23}_{3}-i R^{13}_{3}\right),                          \nonumber\\
Y_{0}  &= -\frac{1}{2} i \sqrt{5} \left(R^{13}_{1}+R^{23}_{2}\right),                         \nonumber\\
Y_{-1} &= \frac{1}{4}\sqrt{15} \left(R^{23}_{3}+i R^{13}_{3}\right),                          \nonumber\\
Y_{-2} &= \frac{1}{4}\sqrt{6} \left(i\left(R^{13}_{1}-R^{23}_{2}\right)+2 Q^{123}\right),     \nonumber\\
Y_{-3} &= R^{12}_{1}+\frac{1}{4} R^{23}_{3}-i\left( R^{12}_{2}+\frac{1}{4} R^{13}_{3}\right). \label{eq:the_Y_representation}
\end{align}

The partial derivatives can also be written up in the manner of a representation via
 \begin{align}
	\partial_{+}&=i\partial_{x_2}+\partial_{x_1},\nonumber\\
	\partial_{0}&=\partial_{x_3}              ,\nonumber\\
	\partial_{-}&=i\partial_{x_2}-\partial_{x_1},\label{eq:raising_lowering_and_level_set_derivatives}
\end{align}
which has been normalised to satisfy the commutation relations given in table~\ref{table:the_J_Diff_commutation_relations}.
\begin{table}
\begin{center}
 \begin{tabular}{ c | c c c }
               %\hline
   $\left[ J_{\alpha},\partial_{\beta} \right]$  & $\partial_{+}$ & $\partial_{0}$ &  $\partial_{-}$ \\
               \hline  $J_{+}$ &              $0$ & $-\partial_{+}$ & $-2\partial_{0}$ \\
                       $J_{0}$ &   $\partial_{+}$ &             $0$ &  $-\partial_{-}$ \\
                       $J_{-}$ & $-2\partial_{0}$ & $-\partial_{-}$ &              $0$ \\
               %\hline
\end{tabular}
\end{center}
\caption{The $J$, $\partial$ commutation relations}\label{table:the_J_Diff_commutation_relations}
\end{table}

Applying $J_0$ operator to $\partial_+$, $\partial_0$, and $\partial_-$ and using the commutation relations in table~\ref{table:the_J_Diff_commutation_relations}, it is easy to show that if $F(\mathbf{Y})$ is an eigenvector of $J_0$ with eigenvalue $\lambda$ then $\partial_+(F(\mathbf{Y}))$, $\partial_0(F(\mathbf{Y}))$, $\partial_-(F(\mathbf{Y}))$ are also eigenvectors of $J_0$ with respective eigenvalues $\lambda+1$, $\lambda$, and $\lambda-1$. 

\begin{theorem}\label{thm:showing_the_derivative_of_a_non_highest_weight_vector_is_the_lowering_of_derivatives_of_highest_weight_vectors}
The derivatives of a polynomial in the $Y_i$'s can always be written in terms of the lowering operator $J_-$ acting on derivatives of highest weight vectors. 
\end{theorem}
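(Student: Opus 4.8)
The plan is to combine two standard ingredients: the decomposition of the polynomial algebra $\mathbb{C}[Y_{-3},\dots,Y_{+3}]$ into irreducible $\mathfrak{so}(3,\mathbb{C})$-modules, and the fact that $\operatorname{ad}(J_-)$ is nilpotent on the span of the derivation triple $\{\partial_+,\partial_0,\partial_-\}$.

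First I would note that each of $J_+,J_0,J_-$ acts on functions of $\mathbf{Y}$ as a vector field with coefficients linear in the $Y_i$, hence preserves polynomial degree. So every homogeneous component $\mathbb{C}[\mathbf{Y}]_d$ is a finite-dimensional $\mathfrak{so}(3,\mathbb{C})$-module and decomposes into irreducibles. In any irreducible summand with highest weight vector $v$ (so $J_+v=0$) the weight vectors are, up to scalars, $v,J_-v,J_-^2v,\dots$, and they span the summand; hence every polynomial $P(\mathbf{Y})$ can be written as a finite sum $P=\sum_j c_j J_-^{\,k_j} v_j$ with the $v_j$ highest weight vectors, which we may take to be $J_0$-eigenvectors of definite weight.

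Next, for $\beta\in\{+,0,-\}$ I would establish the commutation identity
\[
 \partial_\beta\, J_-^{\,k}=\sum_{m=0}^{2}(-1)^m\binom{k}{m}\,J_-^{\,k-m}\,\big(\operatorname{ad}_{J_-}\big)^{m}\!\big(\partial_\beta\big),\qquad k\ge 0,
\]
obtained from the general identity $A B^{k}=\sum_{m\ge 0}\binom{k}{m}B^{\,k-m}\,(\operatorname{ad}_{-B})^{m}(A)$ applied with $A=\partial_\beta$, $B=J_-$, together with the values $\operatorname{ad}_{J_-}(\partial_+)=-2\partial_0$, $\operatorname{ad}_{J_-}(\partial_0)=-\partial_-$, $\operatorname{ad}_{J_-}(\partial_-)=0$ read from table~\ref{table:the_J_Diff_commutation_relations}; these force $(\operatorname{ad}_{J_-})^{3}$ to annihilate $\operatorname{span}\{\partial_+,\partial_0,\partial_-\}$, truncating the sum at $m=2$. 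Since each $(\operatorname{ad}_{J_-})^{m}(\partial_\beta)$ is itself a linear combination of $\partial_+,\partial_0,\partial_-$, applying this with $k=k_j$ to $v_j$ writes $\partial_\beta(J_-^{\,k_j} v_j)$ as a linear combination of terms $J_-^{\,k_j-m}(\partial_\gamma v_j)$ — that is, powers of the lowering operator applied to derivatives of a highest weight vector. Summing over $j$ and ranging $\beta$ over $\{+,0,-\}$ gives the theorem; the $J_0$-weight bookkeeping is automatic because $\partial_+,\partial_0,\partial_-$ shift weight by $+1,0,-1$ and $J_-$ by $-1$.

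I do not expect a substantive obstacle: the statement is structural, and the only thing needing care is the direction in which $\partial_\beta$ is commuted past the powers of $J_-$ and the resulting binomial coefficients and truncation, which is precisely the small calculation recorded in the displayed identity. A minor point worth checking is that the decomposition into $J_-$-strings behaves well across all degrees simultaneously, which it does because the $\mathfrak{so}(3,\mathbb{C})$-action on $\mathbb{C}[\mathbf{Y}]$ is locally finite, being degree-preserving.
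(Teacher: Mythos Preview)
Your proposal is correct and follows essentially the same approach as the paper's proof: both decompose a polynomial into $J_-$-strings starting from highest weight vectors and then commute the derivations $\partial_\beta$ past powers of $J_-$ using the relations in table~\ref{table:the_J_Diff_commutation_relations}. The paper does this one step at a time with the three identities $\partial_\beta J_- = J_-\partial_\beta + (\text{lower term})$ and appeals implicitly to the $\mathfrak{so}(3,\mathbb{C})$-decomposition, whereas you make the decomposition explicit and package the iteration into the closed binomial identity $\partial_\beta J_-^{\,k}=\sum_{m=0}^{2}(-1)^m\binom{k}{m}J_-^{\,k-m}(\operatorname{ad}_{J_-})^{m}(\partial_\beta)$; this is a cosmetic rather than substantive difference.
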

\begin{proof}
If we rewrite the commutation relations in table~\ref{table:the_J_Diff_commutation_relations} we have, for any polynomial function $F(\mathbf{Y})$, 
\begin{align}
  \partial_{+}J_{-}\left(F(\mathbf{Y})\right) &= J_{-}\partial_{+}\left(F(\mathbf{Y})\right)+ 2 \partial_{0}\left(F(\mathbf{Y})\right), \nonumber\\
  \partial_{0}J_{-}\left(F(\mathbf{Y})\right) &= J_{-}\partial_{0}\left(F(\mathbf{Y})\right)+   \partial_{-}\left(F(\mathbf{Y})\right), \nonumber\\
  \partial_{-}J_{-}\left(F(\mathbf{Y})\right) &= J_{-}\partial_{-}\left(F(\mathbf{Y})\right).
\end{align}
So given the derivative of something which isn't a highest weight vector, i.e.\ powers of the lowering operator acting on highest-weight vectors, the relations above can be used to commute the lowering operator $J_-$ with the derivatives until the derivatives act only upon highest-weight vectors.  
\end{proof}

A consequence of this theorem is that there is no information lost by just focusing on the derivatives of the highest weight vector of a representation.
An alternatively way to understand this fact is that the derivatives of the elements an ideal, closed under the action of rotations and dilations, are 
themselves closed ideals, and these can be determined from their highest weight vectors.
As such only the derivatives of the highest weight vectors need be examined when determining the differential closure of the ideals in section~\ref{sec:ideals_covariants_and_root_structures}.

The partial derivatives of a highest weight vector are not themselves highest weight vector.
As highest weight vectors are our stand-in for the full representations, it would be useful to be know how to construct the highest weight vectors in the derivatives directly from a given highest-weight vector.

\begin{theorem}\label{thm:constructing_highest_weight_vectors_from_the_derivatives_of_the_highest_weight_vector}
  Given a representation with normalisation of the form \eqref{eigenvector_normalization}, where $l>1$, there are three representations made from their partial derivatives and they have the highest-weight vectors
  \begin{align}
    g_{l+1}&= \partial_{+}\left(f_l\right),                                                                                                            \label{eq:the_naive_constructed_raising_derivative}\\
    g_{l  }&= \partial_{0}\left(f_l\right)+\frac{1}{2l}\partial_{+}\left(J_{-}(f_{l})\right),                                                          \label{eq:the_naive_constructed_level_set_derivative}\\
    g_{l-1}&= \partial_{-}\left(f_l\right)+\frac{1}{l}\partial_{0}\left(J_{-}(f_{l})\right)+\frac{1}{2l(2l-1)}\partial_{+}\left(J_{-}^2(f_{l})\right). \label{eq:the_naive_constructed_lowering_derivative}
  \end{align}
  If instead $l=0$ then $g_{1} = \partial_{+}\left(f_0\right)$ suffices.
\end{theorem}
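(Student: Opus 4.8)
The plan is to read the commutation relations of Table~\ref{table:the_J_Diff_commutation_relations}, together with the weight computation recorded just before the theorem, as the statement that $\partial_{+},\partial_{0},\partial_{-}$ span a three-dimensional (spin $1$) representation of $\mathfrak{so}(3,\mathbb{C})$ under the adjoint action. Consequently the functions $\partial_{\alpha}(f_m)$ span a copy of the tensor product of the given spin-$l$ representation with this spin-$1$ representation, and for $l\ge 1$ the Clebsch--Gordan rule decomposes it into irreducibles of spins $l+1$, $l$ and $l-1$, each of multiplicity one. In such a decomposition the highest weight vector of a summand is the unique vector (up to scalar) whose $J_{0}$-eigenvalue is the appropriate one and which is annihilated by $J_{+}$; so it suffices to check that each $g$ in \eqref{eq:the_naive_constructed_raising_derivative}--\eqref{eq:the_naive_constructed_lowering_derivative} has the stated weight and satisfies $J_{+}(g)=0$.

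All three verifications run on the one identity $J_{+}\bigl(\partial_{\alpha}(f)\bigr)=\partial_{\alpha}\bigl(J_{+}(f)\bigr)+[J_{+},\partial_{\alpha}](f)$, fed with $[J_{+},\partial_{+}]=0$, $[J_{+},\partial_{0}]=-\partial_{+}$, $[J_{+},\partial_{-}]=-2\partial_{0}$ from Table~\ref{table:the_J_Diff_commutation_relations} and with $J_{+}f_{m}=\sqrt{(l-m)(l+m+1)}\,f_{m+1}$ from \eqref{eigenvector_normalization}. For $g_{l+1}=\partial_{+}(f_l)$ the vanishing of $J_{+}$ is immediate, since $J_{+}f_l=0$ and $[J_{+},\partial_{+}]=0$, and the $J_{0}$-weight is visibly $l+1$. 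For $g_{l}$ one first rewrites $f_{l-1}=\tfrac{1}{\sqrt{2l}}J_{-}(f_l)$, so that $g_l$ is, up to this rescaling, a weight-$l$ combination of $\partial_{0}(f_l)$ and $\partial_{+}(f_{l-1})$; under $J_{+}$ the first term gives $-\partial_{+}(f_l)$ and the second gives $+\partial_{+}(f_l)$ once the coefficient $\tfrac{1}{2l}$ is accounted for, and the two cancel. For $g_{l-1}$ one uses in addition $f_{l-2}=\tfrac{1}{2\sqrt{l(2l-1)}}J_{-}^{2}(f_l)$, so that $g_{l-1}$ is a weight-$(l-1)$ combination of $\partial_{-}(f_l)$, $\partial_{0}(f_{l-1})$ and $\partial_{+}(f_{l-2})$; applying $J_{+}$ produces a multiple of $\partial_{0}(f_l)$ and a multiple of $\partial_{+}(f_{l-1})$, and the coefficients $\tfrac{1}{l}$ and $\tfrac{1}{2l(2l-1)}$ are exactly the ones that make both multiples vanish. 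The degenerate case $l=0$ is separate and trivial: there $\{f_0\}$ is the trivial representation, so $\{\partial_{\alpha}(f_0)\}$ is simply a copy of the spin-$1$ derivative representation, whose highest weight vector is $\partial_{+}(f_0)$, annihilated by $J_{+}$ because $[J_{+},\partial_{+}]=0$ and $J_{+}f_0=0$.

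No genuine obstacle arises; what is needed is care with bookkeeping. One must use the brackets of Table~\ref{table:the_J_Diff_commutation_relations} literally rather than the unitarily-normalised Clebsch--Gordan coefficients, since the $\partial_{\alpha}$ are not unit-normalised; one must track the factors $\sqrt{2l}$ and $2\sqrt{l(2l-1)}$ that arise in passing between $f_{l-1},f_{l-2}$ and $J_{-}(f_l),J_{-}^{2}(f_l)$; and one should note that the combinations written down exhaust the vectors of the prescribed weight precisely because $l\ge 1$ ensures $f_{l-1}$ and $f_{l-2}$ lie in the representation while no $\partial_{\alpha}(f_m)$ with larger $m$ has a small enough weight to contribute. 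The identical computation is valid for every integer $l\ge 1$; the hypothesis $l>1$ in the statement is just what is wanted for the later application to the seven-dimensional $\mathbf{Y}$ representation.
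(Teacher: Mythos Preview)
Your proposal is correct and follows essentially the same route as the paper: verify that each candidate $g$ is a $J_0$-eigenvector of the right weight and is annihilated by $J_+$, using the commutators of Table~\ref{table:the_J_Diff_commutation_relations}. The only organisational differences are that you first rewrite $J_-(f_l)$ and $J_-^2(f_l)$ as scalar multiples of $f_{l-1}$ and $f_{l-2}$ before applying $J_+$, whereas the paper keeps them as they are and uses $[J_+,J_-]=2J_0$ directly; and you invoke the Clebsch--Gordan decomposition of spin-$l$ tensor spin-$1$ to see that three irreducibles exhaust the derivative space, whereas the paper reaches the same conclusion by a bare dimension count $(2l+3)+(2l+1)+(2l-1)=3(2l+1)$.
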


\begin{proof} Using the commutation relations in table~\ref{table:the_J_Diff_commutation_relations} and the action 
  \[
    J_{+}(f_l)=0,\quad J_{0}(f_l)=lf_l
  \]
  we can apply $J_{+}$ to \eqref{eq:the_naive_constructed_raising_derivative} to show
  \begin{align}
    J_{+}\left(g_{l+1}\right)
      &= J_{+}\partial_{+}\left(f_l\right) \nonumber\\
      &= \partial_{+}J_{+}\left(f_l\right) \nonumber\\
      &= 0.
  \end{align}
  Similiary, for $l>1$, applying $J_{+}$ to \eqref{eq:the_naive_constructed_level_set_derivative} shows
  \begin{align}
    J_{+}\left(g_{l}\right)
      &= J_{+} \partial_{0}\left(f_l\right)+\frac{1}{2l}J_{+}\partial_{+}\left(J_{-}(f_l)\right) \nonumber\\
      &= \bigg(\partial_{0}J_{+}\left(f_l\right)-\partial_{+}\left(f_l\right)\bigg)
         +\frac{1}{2l}\partial_{+}J_{+}J_{-}(f_l) \nonumber\\
      &= -\partial_{+}\left(f_l\right)
         +\frac{1}{2l}\partial_{+}\bigg(J_{-}J_{+}(f_l)+2J_{0}(f_l)\bigg) \nonumber\\
      &= -\partial_{+}\left(f_l\right)
         +\partial_{+}\left(f_l\right) \nonumber\\
      &= 0. 
  \end{align}
  And finally, for $l>1$, applying $J_{+}$ to \eqref{eq:the_naive_constructed_lowering_derivative} shows
  \begin{align}
    J_{+}\left(g_{l-1}\right)
      &= J_{+}\partial_{-}\left(f_l\right)
         +\frac{1}{l}J_{+}  \partial_{0}J_{-}(f_l)
         +\frac{1}{2l(2l-1)}J_{+}\partial_{+}J_{-}^2(f_l) \nonumber\\
      &= \bigg(\partial_{-}J_{+}\left(f_l\right)-2\partial_{0}\left(f_l\right)\bigg)
         +\frac{1}{l}\bigg(\partial_0J_{+}J_{-}(f_l)-\partial_{+}J_{-}(f_l)\bigg)\nonumber\\
      &\phantom{+{}} +\frac{1}{2l(2l-1)}\partial_{+}J_{+}J_{-}^2(f_l) \nonumber\\
      &= 2\partial_{0}\left(f_l\right)
         +\frac{1}{l}\partial_0\bigg(J_{-}J_{+}(f_l)+2J_{0}(f_l)\bigg)\nonumber\\
      &\phantom{+{}} -\frac{1}{l}\partial_{+}J_{-}(f_l)
         +\frac{1}{2l(2l-1)}\partial_{+}\bigg(J_{-}^2J_{+}(f_l)+4J_{-}J_{0}(f_l)-2J_{-}(f_l)\bigg) \nonumber\\
      &= -2\partial_0\left(f_l\right)
         +2\partial_0\left(f_l\right)
         +\frac{1}{l}\partial_{+}J_{-}(f_l)
         -\frac{1}{l}\partial_{+}J_{-}(f_l),\nonumber \\
      &= 0.
  \end{align}
  This proves that \eqref{eq:the_naive_constructed_raising_derivative},\eqref{eq:the_naive_constructed_level_set_derivative} and \eqref{eq:the_naive_constructed_lowering_derivative} vanish under $J_+$.
  Since these are eigenvectors of the $J_0$ operator (with three distinct eigenvalues) these form highest-weight vectors for three distinct representations.
  The three representations are purely first-order in terms of the partial derivatives of the $f_i$ and a simple count of dimensions convinces us these representations are sufficient to cover the space of first derivatives.
\end{proof}

\subsection{Decomposition of the Derivatives}\label{sec:decomposition_of_the_derivatives}

In section~\ref{sec:ideals_covariants_and_root_structures} and section~\ref{sec:the_full_classification} we will be considering polynomial ideals formed solely from the $Y_i$'s.
We've already seen, by theorem~\ref{thm:mapping_S1S2S3_to_zero_by_a_local_conformal_transformation}, that the value of the three-dimensional representation is unimportant when classifying systems up to a conformal transformation.
This is perhaps surprising as the $X_i$'s explicitly appear in the derivatives of the $Y_i$'s and one might expect to be able to use these terms to escape from an algebraic variety defined solely in terms of the $Y_i$'s.

Returning, for a moment, to work with the variables $S^i$ instead of $X_i$, an examination of \eqref{eq:dR12_1/dx1}-\eqref{eq:dQ123/dx1} reveals that the partial derivatives an element of the 7-dimensional representation can be written in the form
\begin{align}
  \partial_i(Y)=\widehat{\partial_i}(Y)-S^i D(Y) +J_1(S^i)J_1(Y)+J_2(S^i)J_2(Y)+J_3(S^i)J_3(Y) \label{eq:breaking_up_the_action_of_the_derivative_on_Y}
\end{align}
for $i\in\{1,2,3\}$.
Here $D$ and $J_i$ are the Lie algebra action of the dilations and the rotations, and  $\widehat{\partial_i}$ are the terms in \eqref{eq:dR12_1/dx1}-\eqref{eq:dQ123/dx1} that do not involve the three-dimensional representation  (i.e.\ the $S^i$'s).
Importantly, the form of \eqref{eq:breaking_up_the_action_of_the_derivative_on_Y} shows that the operator
\[
 \widehat{\partial}_i(Y)=\left.\partial_i(Y)\right|_{\mathbf{S}=0}
\]
is a derivation.
As such the operator $\widehat{\partial}_i$ satisfies a Leibniz rule, ensuring that \eqref{eq:breaking_up_the_action_of_the_derivative_on_Y} applies to polynomial combinations of the $Y_i$'s as well.
This shows that the terms involving the $S^i$ can be dropped from our computations when considering polynomial ideal already closed under rotations and dilations. 

Since the values of the $X_i$'s are irrelevant and since only the partial derivatives of the highest weight vector $Y_{+3}$ need be known, the information relevant to our classification can be recovered completely from the following three equations
\begin{align}
 \widehat{\partial}_{+}(Y_{+3}) &=  \frac{2i}{9}\left(Y_{+2}\right)^2-\frac{4i}{45\sqrt{15}} Y_{+1} Y_{+3}, \nonumber\\%\label{eq:widehat_partial_plus_Y+3}\\
 \widehat{\partial}_{0}(Y_{+3}) &=  \frac{i\sqrt{5}}{15} Y_{+3} Y_{ 0} -\frac{i\sqrt{10}}{45} Y_{+1} Y_{+2}, \nonumber\\%\label{eq:widehat_partial_zero_Y+3} \\
 \widehat{\partial}_{-}(Y_{+3}) &= -\frac{2i\sqrt{15}}{5}Y_{-1}Y_{+3} +\frac{17i\sqrt{30}}{45}Y_{+2}Y_{ 0} -\frac{10i}{9}\left(Y_{+1}\right)^2.\label{eq:the_hatted_derivatives} %\label{eq:widehat_partial_minus_Y+3}
\end{align}

\subsection{Reinterpreting the local action of the conformal group}\label{sec:local_action_of_the_conformal_group_on_the_seven_dimensional_representation}

If we use $C^i$ to denote the Lie algebra action of conjugation of an inversion in the sphere \eqref{transformation_inversion_in_the_sphere} with a infinitesimal translation in the $x_i$ direction, then we find that, when restricted to the $Y_i$'s, the Lie algebra action can be written in a form similar to \eqref{eq:breaking_up_the_action_of_the_derivative_on_Y}, namely
\begin{align}
  C_i(Y)=2 x_i D(Y)  + J_1(x_i)J_1(Y) + J_2(x_i)J_2(Y) + J_3(x_i)J_3(Y).\label{action_of_the_mobius_transformation}
\end{align}
So the action of the conformal Lie algebra on the 7-dimensional representation can be decomposed into dilations and rotations.

This combination of dilations and rotations allows the space of representations to be modeled as fixed degree polynomials in one variable.
More precisely the fact that $SL(2,\mathbb{C})$ is the double cover of $SO(3,\mathbb{C})$ allows us to think of our $SO(3,\mathbb{C})$ representations as odd-dimensional $SL(2,\mathbb{C})$ representations.
Given a $(2n+1)$-dimensional $SO(3,\mathbb{C})$ representation, an equivalent action by $SL(2,\mathbb{C})$ can be given be given by an $(2n)$th degree polynomial acted on by a $2 \times 2$ complex matrix in the following manner. 

Consider our seven dimensional representation \eqref{eq:the_Y_representation}, from it we can define the sextic polynomial 
\begin{align}
 p(z)&= \sum_{n=0}^6 (-1)^m\sqrt{\binom{6}{m}}Y_{m-3} z^n \nonumber \\
     &= Y_{-3}-\sqrt{6}Y_{-2}z+\sqrt{15}Y_{-1} -\ldots+Y_{+3}z^6. \label{the_polynomial_p_z}
\end{align}
The action of a matrix $\left(\begin{smallmatrix}a & c\\ b & d\\ \end{smallmatrix}\right)\in SL(2,\mathbb{C})$ is given via the fractional linear transformation
\begin{align}
 p(z)\mapsto(cz+d)^{n}p\left(\dfrac{az+b}{cz+d}\right).\label{action_of_a_matrix_on_P}
\end{align} The action of a dilation can also be represented in the form \eqref{action_of_a_matrix_on_P} through the matrix $\left(\begin{smallmatrix}\delta & 0\\ 0 & \delta\\ \end{smallmatrix}\right)$ and taken together these clearly give the action of $GL(2,\mathbb{C})$.

If we consider the conformal transformation given by the inversion in the sphere \eqref{transformation_inversion_in_the_sphere}, the seemingly complicated action of \eqref{R12_1_inversion_in_the_sphere}-\eqref{Q123_inversion_in_the_sphere} can be succinctly encoded by the action of \eqref{action_of_a_matrix_on_P} under the matrix
\begin{align}
\begin{pmatrix} a & c\\ b & d\\ \end{pmatrix}
    =(u_1^2+u_2^2+u_3^2)^{(2/3)}\begin{pmatrix}u_3 & -u_1+iu_2\\ -u_1-iu_2 & -u_3\\ \end{pmatrix}.
    \label{eq:the_matrix_for_the_action_of_the_inversion_on_the_sphere_for_Y}
\end{align}

So we can easily understand the classification for a fixed regular point, it is the classification of a sextic under the action \eqref{action_of_a_matrix_on_P}, or alternatively, the classification of 6 points in $\mathbb{C}^*$ under the action of $\mbox{PGL}(2,\mathbb{C})$.

\section{Differentially Closed Polynomials Ideals, Covariants and Root Structure}\label{sec:ideals_covariants_and_root_structures}

The canonical forms of $p(z)$ under the action of $GL(2,\mathbb{C})$ can be given in terms of absolute invariants, in this case,
root multiplicities and multi-ratios.
However this information is difficult to handle algebraically when considering translation of the regular point.
It would be preferable to encode the information in terms of the $Y_i$'s in an invariant manner.
A natural way to do this is to consider rotationally closed polynomial ideals generated by homogeneous polynomials in terms of the $Y_i$'s.
This description is easier to deal with algebraically, but even a simple form, such as a sextic with a root of multiplicity four and a root of multiplicity two, is described by a sizable set of  generators.
Thankfully, since we think of our representation as a polynomial in one variable acted on by $GL(2,\mathbb{C})$, we are able to express our representations in terms of covariants of $p(z)$, for which a Hilbert basis can be constructed.
Written in terms of this Hilbert basis the ideals can be described in a few lines.

\subsection{Polynomial Covariants}

Consider a $j$th degree polynomial $p(z;\mathbf{a})$, where the components of $\mathbf{a}$ are the coefficients of $p$.
We induce an action of $GL(2,\mathbb{C})$ on the coefficients $\mathbf{a}$ by making the identification 
\[
 \left(cz+d\right)^j p\left(\frac{az+b}{cz+d};\mathbf{a}\right)=p(z;\widetilde{\mathbf{a}}).
\]
A covariant is a $k$th degree polynomial $q(z;\mathbf{a})$, whose coefficients are functions of the original coefficients $\mathbf{a}$, such that the following holds
\begin{align}
 \left(cz+d\right)^k q\left(\frac{az+b}{cz+d};\mathbf{a}\right)=\Delta^m q (z;\widetilde{\mathbf{a}}),\label{eq:the_definition_of_a_covariant}
\end{align}
where $\Delta=\det\left(\begin{smallmatrix}a & c\\ b & d\\ \end{smallmatrix}\right)$. The exponent $m$, is called the covariant-weight.
By the identification above, $p(z;\mathbf{a})$ has covariant-weight zero.

\subsection{A Hilbert Basis for covariants of $p(z)$}

The Hilbert basis for the ring of covariants is, simply put, a set of covariants such that every other covariant can be written as a (not necessarily unique) polynomial in the given basis covariants.
Gordon's method (describe in Ref.~\cite{MR1694364}) provides an algorithm for computing such a Hilbert basis.
The main tool used in the constructing of the Hilbert basis is the so-called transvectant\footnote{strictly speaking the algorithm uses the `partial transvectant' but the result obtained can always be rewritten in terms of the full transvectant.}.
The $r$th transvectant of an $m$th degree polynomial $Q$ and an $n$th degree polynomial $R$ is given by
\begin{align*}
 \left(Q,R\right)^{(r)}=r!\sum_{k=0}^{r}(-1)^k\binom{n-r+k}{k}\binom{m-k}{r-k}Q^{(r-k)}(z)R^{(k)}(z)
\end{align*}
where $Q^{(k)}(z)$ is the $k$th derivative of $Q$ with respect to $z$.

The Hilbert basis for our sextic is composed of 26 covariants.
These covariant are given in table~\ref{table:the_hilbert_basis} and are assigned names consisting of a capital letter with a numerical subscript.
The position of the letter in the English alphabet indicates the polynomial degree of the coefficients in terms of the $Y_i$'s and the subscript indicates the weight of the highest-weight vector.
This convention, with the addition of descriptive superscripts, will be used when describing the covariants whose coefficients generate the polynomials ideals in section~\ref{sec:the_full_classification}.
\renewcommand{\arraystretch}{1.5}
\begin{table}
\begin{center}
 \begin{tabular}{|c |  l l|}
 \hline
Order & Covariants & \\
\hline
1
&  $A_3 = p(z)$ 
& \\
\hline
2
& $B_4 = (A_3,A_3)^{(2)}$
& $B_2 = (A_3,A_3)^{(4)}$ \\
& $B_0 = (A_3,A_3)^{(6)}$ 
& \\
\hline
3
& $C_6 = \frac{1}{2}(A_3,B_4)^{(1)}$
& $C_4 = \frac{1}{2}(A_3,B_2)^{(1)}$ \\
& $C_3 = \frac{1}{6}(A_3,B_2)^{(2)}+5A_3 B_0$
& $C_1 = \frac{1}{6}(A_3,B_2)^{(4)}$ \\
\hline
4
& $D_5 = \frac{2}{3}(A_3,C_3)^{(1)}$
& $D_3 = (A_3,C_1)^{(1)}$ \\
& $D_2 = (A_3,C_1)^{(2)}$
& $D_0 = \frac{1}{15}(A_3,C_3)^{(6)}$ \\
\hline
5
& $E_4 = (A_3,D_2)^{(1)}$
& $E_2 = (A_3,D_2)^{(3)}$ \\
& $E_1 = (A_3,D_2)^{(4)}$ 
& \\
\hline
6
& $F_3^{(1)} = (A_3,E_1)^{(1)}$
& $F_0       = (C_1,C_1)^{(2)}$ \\
& $F_3^{(2)} = -\frac{5}{2}F_3^{(1)}+\frac{1}{2}(A_3,E_2)^{(2)}$
& \\
\hline
7
& $G_2 = \frac{1}{2}(B_2,E_1)^{(1)}$
& $G_1 = (C_1,D_2)^{(2)}$ \\
\hline
8
& $H_1 = \frac{-1}{12}(A_3,G_2)^{(4)}$ 
& \\
\hline
9
& $I_2 = (A_3,H_1)^{(2)}$ 
& \\
\hline
10
& $J_1 = (A_3,I_2)^{(4)}$
& $J_0 = (C_1,G_1)^{(2)}$ \\
\hline
12
& $L_1 = \frac{1}{6}(B_2,J_1)^{(2)}+\frac{1}{3}B_0 J_1$ 
& \\
\hline
15
& $O_0 = (L_1,C_1)^{(2)}$ 
& \\
\hline
\end{tabular}
\end{center}
\caption{A Hilbert basis for the covariants of $p(z)$}\label{table:the_hilbert_basis}
\end{table}
\renewcommand{\arraystretch}{1.0}

\subsection{Two simple ideals}\label{sec:Two_Relative_Invariants}

Before discussing the full classification result it is worth going over two illustrative examples.
The first is just to note that if the all the $Y_i$'s that carry the seven dimensional representation \eqref{eq:the_Y_representation} take the value zero, then it is clear from \eqref{eq:the_hatted_derivatives} that the value of the derivatives of the $Y_i$'s are, to any order, zero as well.
Hence a system will satisfy the conditions $Y_{+3},\ldots,Y_{-3}=0$ at one point if and only if it does so everywhere, as will any conformal scaling of this system.
This immediately allows us to split the systems into two distinct conformal classes: Those with coefficient functions $Y_i$ that lie in the polynomial ideal generate by the covariant $A_3$, and those that don't.

A more complicated example is given by considering the Hessian of our sextic.
For any non-zero polynomial the vanishing of the Hessian implies the polynomial has a single root.
For a $6{th}$ order polynomial, the Hessian is the $8{th}$ order covariant given by
\begin{align}
  H(z) 
    &= 30\left(p^{\prime\prime}(z)p(z)-\frac{6}{5}\left(p^\prime(z)\right)^2\right) \nonumber \\
    &= 150 (\frac{6}{\sqrt{15}} Y_{+3} Y_{+1}-Y_{+2}^2) z^8+\ldots+150 (\frac{6}{\sqrt{15}} Y_{-3} Y_{-1}-Y_{-2}^2) \nonumber\\
    &=\frac{1}{2}\left(B^{(4)}_{+4} z^8-\sqrt{8}B^{(4)}_{+3} z^7+\ldots+B^{(4)}_{-4}\right)\nonumber\\
    &= \frac{1}{2}B_4, \label{the_hessian} 
\end{align}
where $B_4$ is taken from the Hilbert basis in table~\ref{table:the_hilbert_basis}.
For future reference the coefficients of $B_4$ have been given the names $B^{(4)}_{i}$, $i\in\{-4,\ldots,+4\}$.
If we consider the action of the restricted derivatives \eqref{eq:the_hatted_derivatives} on the highest-weight term in \eqref{the_hessian} we find the following relations
\begin{align}
 \widehat{\partial}_{+}\left(B^{(4)}_{+4}\right)
       %&= \left(i\widehat{\partial}_2+\widehat{\partial}_1\right)\left(B^{(4)}_{+4}\right) \nonumber \\
       &= \left(-6X_{+1}-\frac{28}{9\sqrt{15}}iY_{+1}\right)B^{(4)}_{+4}
         +\frac{14}{9\sqrt{3}}iY_{+2}B^{(4)}_{+3}-\frac{2\sqrt{7}}{9}iY_{+3}B^{(4)}_{+2},
\end{align}
\begin{align}
 \widehat{\partial}_{0}\left(B^{(4)}_{+4}\right)
       %&= \widehat{\partial}_3\left(B^{(4)}_{+4}\right) \nonumber\\
       &= \left(\sqrt{2}X_{ 0}+\frac{56i}{9\sqrt{5}} Y_{0}\right)B^{(4)}_{+4}
         +\left(\sqrt{2}X_{+1}-\frac{77i\sqrt{2}}{9\sqrt{15}}Y_{+1}\right)B^{(4)}_{+3} \nonumber\\
       &\phantom{=} +\frac{10i\sqrt{14}}{9\sqrt{3}} Y_{+2}B^{(4)}_{+2}
         -\frac{i\sqrt{14}}{3}B^{(4)}_{+1}Y_{+3},
\end{align}
So the derivatives  $\partial_{+}\left(B^{(4)}_{+4}\right)$ and $ \partial_{0}\left(B^{(4)}_{+4}\right)$ are contained in the ideal formed by the coefficients of the Hessian.
From theorems~\ref{thm:showing_the_derivative_of_a_non_highest_weight_vector_is_the_lowering_of_derivatives_of_highest_weight_vectors} and~\ref{thm:constructing_highest_weight_vectors_from_the_derivatives_of_the_highest_weight_vector} this implies that the 11-dimensional and 9-dimensional representations made from the partial derivatives of the coefficient of the Hessian are contained in this ideal.

However if we look at the lowering derivative we find
\begin{align}
 \widehat{\partial}_{-}\left(B^{(4)}_{+4}\right)
       %&= \left(i\widehat{\partial}_2-\widehat{\partial}_1\right)\left(B^{(4)}_{+4}\right) \nonumber\\
       &= \left(2X_{-1}+\frac{16i\sqrt{3}}{\sqrt{5}}Y_{-1}\right)B^{(4)}_{+4}
         -\left(2X_{ 0}+\frac{8i\sqrt{2}}{3\sqrt{5}}Y_{ 0}\right)B^{(4)}_{+3} \nonumber\\
       &\phantom{=} -\frac{194i}{3\sqrt{105}}Y_{+1}B^{(4)}_{+2}
         +\frac{14i\sqrt{7}}{3\sqrt{3}}Y_{+2}B^{(4)}_{+1} 
         -\frac{2i\sqrt{14}}{\sqrt{5}}Y_{+3}B^{(4)}_{ 0} \nonumber\\
       &\phantom{=} +\frac{176i}{105\sqrt{15}}\left(Y_{+1}^2-\frac{\sqrt{10}}{\sqrt{3}}Y_{ 0}Y_{+2}+\frac{\sqrt{5}}{\sqrt{3}}Y_{-1}Y_{+3}\right)Y_{+1},\label{eq:lowering_diff_of_B4}
\end{align}
and the final term is not in the ideal formed by the coefficients of Hessian.
Adding these cubic conditions gives a differentially closed ideal.
However a simple check reveals that these extra conditions vanish if our sextic has the multiplicity six root implied by the Hessian vanishing, so they do not actually place any further restrictions on this root structure.

To avoid confusion it would be best to work with radical ideals, i.e.\ ideals containing all polynomials which
vanish on their associated variety.
In the case of the Hessian this would be the ideal generated by the coefficients of the covariants $B_0,B_2$ and $B_4$.

Any ideal can be differentially closed by adding derivatives to the set of generators of the ideal.
This is guaranteed to be a finite process due to the Noetherian property polynomial rings over $\mathbb{C}$.
We denote the output of this differential closure procedure on an ideal $I$ by $\overline{I}$, and we denote the radical by $\sqrt{I}$.
Once a differential closed ideal is found we can, conceptually at least, begin to discuss the radical the ideal.

\begin{lemma}
  If an ideal $I$ is closed under differentiation then so is the radical of the ideal, $\sqrt{I}$.
  That is to say
  \[
    \overline{\sqrt{\overline{J}}}=\sqrt{\overline{J}}
  \]
  for any ideal $J$.
\end{lemma}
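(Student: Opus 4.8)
The plan is to reduce the statement to the classical fact that, over a field of characteristic zero, the radical of an ideal stable under a derivation is again stable under that derivation. First I would note that ``closed under differentiation'' here means stability under each of the commuting operators $\partial_1,\partial_2,\partial_3$ (equivalently the rotation-adapted $\widehat\partial_+,\widehat\partial_0,\widehat\partial_-$), and that on the relevant polynomial ring over $\mathbb{C}$ each of these acts as a genuine \emph{derivation}: this is exactly the Leibniz property recorded when \eqref{eq:breaking_up_the_action_of_the_derivative_on_Y} was introduced (the $\widehat\partial_i$ are derivations, since the $S^i$-terms that would spoil Leibniz drop out on a rotation- and dilation-closed ideal). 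Because stability under differentiation is nothing but simultaneous stability under each $\partial_i$, it suffices to fix a single derivation $D$ with $D(I)\subseteq I$ and show $D(\sqrt I)\subseteq\sqrt I$.

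Next I would run the standard characteristic-zero argument. Let $a\in\sqrt I$ and let $n\ge 1$ satisfy $a^n\in I$; the goal is $(Da)^{2n-1}\in I$, which gives $Da\in\sqrt I$. I would prove by induction on $k$, for $1\le k\le n$, that $a^{n-k}(Da)^{2k-1}\in I$. The base case $k=1$ is $D(a^n)=n\,a^{n-1}Da\in I$ divided by the unit $n$. For the inductive step, apply $D$ to $a^{n-k}(Da)^{2k-1}\in I$ to get $(n-k)\,a^{n-k-1}(Da)^{2k}+(2k-1)\,a^{n-k}(Da)^{2k-2}D^2a\in I$; multiply by $Da$; observe the second summand equals $a^{n-k}(Da)^{2k-1}$ times a ring element, hence lies in $I$ by the inductive hypothesis; and divide the surviving term by $n-k$, which is a nonzero integer since $k\le n-1$, to obtain $a^{n-(k+1)}(Da)^{2(k+1)-1}\in I$. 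Taking $k=n$ yields $(Da)^{2n-1}\in I$, as required, so $\sqrt I$ is $D$-stable; doing this for $D=\partial_1,\partial_2,\partial_3$ shows $\sqrt I$ is differentially closed.

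Finally I would apply this with $I=\overline J$: by construction $\overline J$ is differentially closed (and the closure procedure terminates by the Noetherian property), so the previous step gives that $\sqrt{\overline J}$ is again differentially closed, i.e.\ $\overline{\sqrt{\overline J}}=\sqrt{\overline J}$. I do not expect a real obstacle: the only points needing care are (i) confirming the operators are honest derivations so Leibniz is available — already established — and (ii) tracking the exponents in the induction so that every division is by a nonzero integer. No Gröbner-basis machinery is needed; the content is the one-line induction above.
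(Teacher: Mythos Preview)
Your proof is correct and is the classical Seidenberg-type induction; it even proves the slightly sharper statement $(Da)^{2n-1}\in I$ rather than merely $(Da)^m\in\sqrt I$ for some $m$. The paper takes a different and shorter route: instead of inducting, it applies $\partial_i$ a full $n$ times to $A^n$ and observes
\[
\partial_i^{\,n}(A^n)=n!\,(\partial_iA)^n+O(A),
\]
where $O(A)$ collects all terms containing at least one factor of $A$. Since $I$ is differentially closed, the left side lies in $I\subset\sqrt I$; since $A\in\sqrt I$, so does $O(A)$; hence $(\partial_iA)^n\in\sqrt I$ and $\partial_iA\in\sqrt I$. The paper's argument trades your induction bookkeeping for working in $\sqrt I$ rather than $I$, while your argument stays inside $I$ throughout and yields the explicit exponent $2n-1$. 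Both rely on characteristic zero in the same way (dividing by $n!$ versus dividing by successive integers).

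One minor comment: your preliminary discussion about verifying Leibniz for $\widehat\partial_i$ is unnecessary for this lemma. The full $\partial_i$ already act as derivations on the polynomial ring $\mathbb C[\mathbf X,\mathbf Y]$ via the chain rule and the closed-form first-derivative formulas; nothing about the $S^i$-terms threatens Leibniz. The $\widehat\partial_i$ reduction in the paper serves a different purpose (showing that rotation- and dilation-closed ideals in the $Y_i$ alone are preserved) and is not invoked in the proof of this lemma.
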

\begin{proof}
  If $A$ is in the radical of the ideal $I$ then $A^n \in I$ for some integer $n$.
  Taking the $n$th derivative of $A^n$ with respect to $x_i$ gives
  \begin{align}
    \partial_i^n(A^n)=n!(\partial_iA)^n+O(A),\label{eq:the_nth_partial_derivative_of_A_to_the_n}
  \end{align}
  where the terms hidden in $O(A)$ are at least first order in $A$.
  Rearranging \eqref{eq:the_nth_partial_derivative_of_A_to_the_n} shows that $(\partial_iA)^n$ can be written as a sum of terms in $\sqrt{I}$.
  Since this is a radical ideal, $\partial_iA$ must be in $\sqrt{I}$ as well.
  Hence $\sqrt{I}$ is  differentially closed.
\end{proof}

\subsection{Ideals defining co-incident root-loci}\label{sec:ideals_generated_from_coincident_root_loci}

A starting point for our search for differentially closed polynomial ideals will be given by considering the necessary conditions for different root multiplicities.
If we include the case where the sextic is identically zero then are 12 different multiplicities possible.
These correspond to the 11 partitions of six and the case $p(z)=0$; we will denote these 12 cases by
  $$[111111],[21111],[2211],[222],[3111],[321],[33],[411],[42],[51],[6],[0]$$
where, for example, $[21111]$ denotes a double root, and $[0]$ indicates the zero-polynomial.

While there are interesting theoretical techniques for describing the ring of covariants vanishing under certain root structure (e.g.\ see Ref.~\cite{MR2102640}) it was simpler in this case to simply construct the ideals using elimination ideals in the algebra software Singular.
The use of elimination ideals has the advantage that the ideals obtained are radical by construction.

\section{The Full Classification}\label{sec:the_full_classification}

\paragraph{Case [0]:} The simplest form for the sextic is given by $p(z)\equiv0$, that is, the coefficients $Y_{\pm m}=0$.
In terms of the Hilbert basis (table~\ref{table:the_hilbert_basis}) this, somewhat trivially, is represented by the ideal generate from the coefficients of the covariant $A_3$.
This ideal will be denoted $I_{[0]}$.

As was already discussed in section~\ref{sec:Two_Relative_Invariants}, this ideal is closed under differentiation (i.e.\ $\overline{I_{[0]}}=I_{[0]}$).
Thus, if the elements of the ideal $I_{[0]}$ are simultaneously zero, they remain so under translation of the regular point.
There is only one 7-tuple $Y_{i}\in\mathbb{C}^7$ which satisfies this ideal, that is, only 7-tuple 
which is a common zero of all polynomials in
the ideal.  Hence this ideal classifies a single conformal class.

A representative of the systems in the $[0]$-type class is given by the isotropic oscillator on flat space,
\begin{align}
 V_O= a (x_1^2+x_2^2+x_3^2)+b x_1+c x_2+d x_3+e.
\end{align}
The classifying sextic for $V_0$ is, of course,
\[
 p(z)=0.
\]

\paragraph{Case [6]:}The ideal $I_{[6]}$ containing the conditions for the $[6]$-type root structure is given by the coefficients of the covariants
\begin{align*}
 B_{ 4}^{[6]} &= 
       B_4,  \\
 B_{ 2}^{[6]} &= 
       B_2,  \\
 B_{ 0}^{[6]} &= 
       B_0.  
\end{align*}
The ideal $I_{[6]}$ consists of 15 second-order polynomials and it can easily be shown that the cubic polynomials that arise from the derivatives are contained in $I_{[6]}$, i.e.\ $\overline{I_{[6]}}=I_{[6]}$.
So just like the previous case this shows the vanishing of $I_{[6]}$ locally implies it also vanishes globally, meaning $I_{[6]}$ either vanishes identically, or not at all.
This implies the root structure $[6]$ is persistent feature when found at a regular point in a system (i.e.\ it remains a feature in an open set around that point).
Since $I_{[0]} = 0$ is also a persistent feature, the $[6]$-type sextic cannot degenerate into a $[0]$-type sextic at a non-generic point.
The local action of the conformal group, acting like $GL(2,\mathbb{C})$, is transitive on three or fewer roots, and therefore any two sextics in this class are equivalent.
Hence this corresponds to a single conformal class.

A particular representative of the $[6]$-type systems is the (Euclidean superintegrable) system
\begin{multline}
 V_{A} =  a \left((x_1-i x_2)^3+6({x_1}^2+ {x_2}^2+ {x_3}^2)\right)+b \left((x_1-i x_2)^2+2(x_1+ i x_2)\right) \\
         +c (x_1-i x_2)+d x_3+e.
\end{multline}
The classifying sextic for $V_{A}$ is given by
$$
 p(z)=iz^6,
$$
which has a leading coefficient that clearly vanishes nowhere.

\paragraph{Case [51]:}The ideal $I_{[51]}$ of conditions for the $[51]$ root structure is generated by a subset of the $I_{[6]}$ generators, namely the coefficients of the covariants
\begin{align*}
 B_{ 2}^{[51]} &= 
       B_2,  \\
 B_{ 0}^{[51]} &= 
       B_0.  
\end{align*}
As before it is simple to show $\overline{I_{[51]}}=I_{[51]}$ and hence represents another persistent root structure.
Like before, the action of the conformal group, through $GL(2,\mathbb{C})$, is transitively on the set of $[51]$ sextics.
Hence this corresponds to a single conformal class.
A suitable choice is the (Euclidean superintegrable) system
\begin{multline}
 V_{VII} = a (x_1+i x_2)+b \left(3 (x_1+i x_2)^2+x_3\right)\\+c\left(16 (x_1+i x_2)^3+(x_1-i x_2)+12 x_3 (x_1+i x_2)\right) \\
                       +d\left(5 (x_1+i x_2)^4+(x_1^2+x_2^2+x_3^2\right)+6 \left(x_1+i x_2)^2 x_3\right)+e,
\end{multline}
which has classifying sextic
\[
 p(z)=24i \left((x_1+ix_2)z-\frac{3}{2} \right)z^5.
\]
Note that this sextic cannot degenerate into a $[6]$-type structure at any finite point.

\paragraph{Case [42]:}The ideal of conditions for the $[42]$ root structure is generated by the coefficients of the 5 covariants
\begin{align}
 B_{ 0}^{[42]} &= 
       B_0,  \nonumber\\
 C_{ 1}^{[42]} &= 
       C_1,  \nonumber\\
 D_{ 8}^{[42]} &= 
       27 B_4^2-50 B_2A_3^2, \nonumber\\
 D_{ 6}^{[42]} &= 
       20 C_3 A_3 + B_4 B_2, \nonumber\\
 D_{ 0}^{[42]} &= 
       D_0.
\end{align}
Unlike the cases examined so far, this ideal is not closed under differentiation.
Adding the first derivatives closes the ideal, and using Gr\"obner bases it can be shown that 
\begin{align}
{I_{[6]}}^3& \subset \overline{I_{[42]}}\subset{I_{[6]}}.\label{eq:containment_of_the_closed_42_structure_and_the_6_structure}
\end{align}
From this, and the fact that $I_{[6]}$ is radical, it can be concluded that the radical of the differential closure of $I_{[42]}$ is $I_{[6]}$, i.e.\ 
\[
  \sqrt{\overline{I_{[42]}}}=I_{[6]}.  
\]

So, as should have been expected given the differential non-closure of the ideal $I_{[42]}$, forcing the conditions for the $[42]$-type root structure to hold identically will only yield sextics which are generically of type-$[6]$ (The type-$[0]$ root structure is also a possible degeneration, however this is just a further degeneration of type-$[6]$).
This proves that no system has coefficient functions that form a type-$[42]$ sextic everywhere.
However a sextic with a type-$[42]$ root structure gives valid values for the sextic for $p(z)$ and hence there must exist a system whose classifying sextic takes a type-$[42]$ root structure at a non-generic point.

\paragraph{Case [33]:}
The ideal $I_{[33]}$ of conditions for the $[33]$-type root structure is generated by the coefficients of the three covariants
\begin{align*}
 C_{ 6}^{[33]} &= 
       C_6,  \\
 C_{ 4}^{[33]} &= 
       C_4,  \\
 C_{ 3}^{[33]} &= 
       33 B_0 A_3 - 5 C_3.
\end{align*}
This ideal is closed under differentiation.
Since the only degenerations of the $[33]$-type root structure are either $[6]$-type or $[0]$-type it is safe to conclude that a system whose coefficient functions yield a $[33]$-type root structure at a regular point has the $[33]$-type root structure at every regular point.
The transitivity of $GL(2,\mathbb{C})$ on three or fewer roots means that this will correspond to a single conformal class.
A particular representative of the systems in this conformal class is given by the (Euclidean superintegrable) system
\[
  V_{OO} =a (4 x_1^2+4 x_2^2+x_3^2)+b x_1+c x_2+\frac{d}{x_3^2}+e
\]
which has classifying sextic
\[
 p(z)=\frac{6i}{x_3}z^3.
\]

\paragraph{Case [411]:}The ideal $I_{[411]}$ of conditions for the $[411]$ root structure is generated by the coefficients of the three covariants
\begin{align*}
 B_{ 0}^{[411]} &= 
       B_0, \\
 C_{ 1}^{[411]} &=  
       C_1, \\
 D_{ 0}^{[411]} &=  
       D_0. 
\end{align*}
The ideal $I_{[411]}$ is closed under differentiation.
So if a system has `structure functions' giving a $[411]$-type sextic a regular point, it will do so in an open set around that point.
The ideal $I_{[411]}$ is contained within the ideal $I_{[42]}$ (meaning the algebraic set defined by the ideal $I_{[411]}$ contains the algebraic set defined by the ideal $I_{[42]}$) and hence, even without explicitly checking, it's clear that the transient $[42]$-type root structure discussed above will break up into the $[411]$ structure under translation of the regular point.

The local action of the conformal group, through $GL(2,\mathbb{C})$, is transitive on the $[411]$-type sextics and hence every system in this class will be conformally related.
Note that, up to a local-conformal transformation, there are two distinct types of regular-points within system in this class, those which give a $[42]$-type sextic and those which give a $[411]$-type sextic.

A particular representative of the systems in this conformal class is given by the (Euclidean superintegrable) system
\[
  V_{V}=a (4 x_3^2+x_1^2+x_2^2)+b x_3+\frac{c}{(x_1+i x_2)^2}+d \frac{x_1-i x_2}{(x_1+i x_2)^3}+e,
\]
which has classifying sextic
\[
 p(z)=\left(\frac{9i}{x_1+ix_2}z^2+\frac{3i(x_1-i x_2)}{(x_1+i x_2)^2}z\right).
\]
This sextic is generically of type-$[411]$, but takes the $[42]$-type root structure along the hypersurface 
\[
 x_1-i x_2=0.
\]

\paragraph{Case [321]:}
The ideal $I_{[321]}$ of conditions for the type-$[321]$ root structure is given by the coefficients of the 5 covariants
\begin{align*}
 D_{ 0}^{[321]} &= 
       11 B_0^2-25 D_0,  \\
 E_{+1}^{[321]} &=  
       3 C_1 B_0-5 E_1, \\
 F_{+8}^{[321]} &=  
       75 (2610 D_2+827 B_2 B_0){A_3}^2-100(125 C_3 B_2+144 C_1 B_4) A_3\\ 
       &\quad+(3125 {C_4}^2+5184 {B_4}^2 B_0), \\
 F_{+6}^{[321]} &=  
       300(61 {B_0}^2-115 D_0){A_3}^2-20(5 C_1 B_2+22 C_3 B_0) A_3\\
                &\phantom{=}- B_4(7 B_2 B_0-270 D_2), \\
 F_{ 0}^{[321]} &=  
       41 {B_0}^3-75 D_0 B_0-125 F_0.
\end{align*}
This ideal is not closed under differentiation but it closes after 2 derivatives.
By the obvious transitivity of $GL(2,\mathbb{C})$ on the type-$[321]$ sextics, it is clear that the location of the three roots cannot be relevant to the vanishing of $\overline{I_{[321]}}$.
An educated guess would therefore be that the algebraic-set corresponding to $\overline{I_{[321]}}$ will be points satisfying either the $I_{[33]}$ ideal or the $I_{[51]}$ ideal.
That is, it should be expected that
\[
 \sqrt{\overline{I_{[321]}}}\overset{?}{=}I_{[51]}\cap I_{[33]}.
\]
The intersection $I_{[51]}$ and $I_{[33]}$ can be calculated by eliminating $t$ from the convex combination of ideals $(1-t)I_{[33]}+tI_{[51]}$ (e.g.\ see Hassett, chapter 4 \cite{hassett2007introduction}).
Denoting this ideal by $I_{[51]\wedge[33]}$, calculations show this elimination ideal to be generated by the coefficients of the two covariants
\begin{align*}
 C_{ 4}^{[51]\wedge[33]} &= 
        C_4, \\
 C_{ 3}^{[51]\wedge[33]} &= 
        33B_0 A_3-5 C_3.
\end{align*}
Straightforward Gr\"obner basis calculations now show 
\begin{align*}
  \left(I_{[51]}\cap I_{[33]}\right)^3&\subset \overline{I_{[321]}}\subset I_{[51]}\cap I_{[33]}.
\end{align*}
Hence, as predicted,
\[
 \sqrt{\overline{I_{[321]}}} = I_{[51]}\cap I_{[33]}.
\]
So any systems whose coefficient functions cause the ideals $I_{[321]}$ to vanish identically must lie in either the $[33]$ or $[51]$ classes (or their degenerations) and thus have already been classified above.

\paragraph{Case [222]:}
The ideal of conditions for the $[222]$-type root structure is given by the coefficients of the 5 covariants
\begin{align*}
 D_{ 8}^{[222]} &= 
        50 B_2 A_3^2-27 B_4^2, \\
 D_{ 6}^{[222]} &= 
        160 B_0 A_3^2-B_4 B_2-20 C_3 A_3, \\
 D_{ 4}^{[222]} &= 
        -3 B_4 B_0+25 C_1 A_3, \\
 D_{ 2}^{[222]} &= 
        B_2 B_0+90 D_2, \\
 D_{ 0}^{[222]} &= 
        43 B_0^2-75 D_0.
\end{align*}
This ideal is not closed under differentiation, but closes after one derivative.
Straightforward calculations show
\begin{align*}
            I_{[6]}^4&\subset  \overline{I_{[222]}}\subset I_{[6]}. 
\end{align*}
Hence $\sqrt{\overline{I_{[222]}}}=I_{[6]}$ and any systems with `structure functions' that cause the ideal $I_{[222]}$ to vanish identically are in the class of  $[6]$-type systems and therefore have already been classified above.

\paragraph{Case [3111]:}
The ideal $I_{[3111]}$ of conditions for the $[3111]$-type sextics is given by the coefficients of the three covariants
\begin{align*}
 D_{ 0}^{[3111]} &=
        11 B_0^2-25 D_0, \\
 E_{ 1}^{[3111]} &=
        3 C_1 B_0-5 E_1, \\
 F_{ 0}^{[3111]} &=
        8 B_0^3-125 F_0.
\end{align*}
This is easily shown to be a differentially closed ideal, so we can conclude that a $[3111]$-type root structure is stable under translation of the regular point.
Unlike previous cases the action of the local conformal group (acting as $GL(2,\mathbb{C})$) is not automatically transitive on the set of $[3111]$-type sextics.
To distinguish between different possible $[3111]$-type root structures an extra piece of informations is necessary: the multi-ratio of the roots.
Once the mulit-ratio of the four roots is specified the $[3111]$-type sextic is canonically defined (up to a local conformal transformation).

Denoting the 4 roots by $\mathbf{r}=(\eta_1,\eta_2,\eta_3,\eta_4)$ (where $\eta_4$ will be the triply repeated root) the multi-ratio can be defined by 
\begin{align}
 \lambda=\frac{(\eta_1-\eta_2)(\eta_3-\eta_4)}{(\eta_2-\eta_3)(\eta_4-\eta_1)}.
\end{align}
Assuming $Y_{+3}$ is non-zero (or performing a small rotation such that it is non-zero), define
\[
\mathbf{a}=\frac{1}{Y_{+3}}\left(Y_{-1},Y_{ 0},Y_{+1},Y_{+2}\right). 
\]
The function is $\mathbf{a}$ can be expressed as functions of the $\eta_i$ using Vieta's formulas, and hence the Jacobian $\frac{\partial\lambda}{\partial\mathbf{x}}$ can be calculated via
\begin{align}
 \frac{\partial\lambda}{\partial\mathbf{x}}
        &=\frac{\partial\lambda}{\partial\mathbf{r}}\frac{\partial \mathbf{r}}{\partial \mathbf{a}}\frac{\partial \mathbf{a}}{\partial \mathbf{x}} \nonumber \\
       &=\frac{\partial\lambda}{\partial\mathbf{x}}\left(\frac{\partial \mathbf{a}}{\partial \mathbf{r}}\right)^{-1}\frac{\partial \mathbf{a}}{\partial \mathbf{x}}.\label{eq:the_lambda_x_Jacobian}
\end{align}
Without loss of generality the roots can be assumed take the values
\[
  \mathbf{r}_0=\left(-1,0,1,\frac{1-\lambda}{1+\lambda}\right)
\]
at the regular point.
Substituting this into \eqref{eq:the_lambda_x_Jacobian} gives,
\begin{align}
 \left.\frac{\partial \lambda}{\partial x_1}\right|_{\mathbf{r}_0}&=\frac{-8i(1-\lambda+\lambda^2)(\lambda-1) \lambda}{27(1+\lambda)^3},\nonumber\\
 \left.\frac{\partial \lambda}{\partial x_2}\right|_{\mathbf{r}_0}&=\frac{-16(1-\lambda+\lambda^2)(\lambda-1) \lambda}{27(1+\lambda)^3},\nonumber\\
 \left.\frac{\partial \lambda}{\partial x_3}\right|_{\mathbf{r}_0}&=0.\label{eq:the_action_of_a_translation_on_the_multiration_in_the_3111_case}
\end{align}
These imply that the action of the translations is rank 1 for almost all values of $\lambda$.
The five possible exceptions correspond to points where $\lambda=0,1,\infty$ or $\exp(\pm i\pi/3)$, where the action is, to a first order approximation, rank 0. 

The cases $\lambda=0,1,\infty$ correspond to the degenerate root structure $[411]$ and so have already been examined.
The case $\lambda=\exp(\pm i\pi/3)$ will be considered below.

\paragraph{Subcase [3111] \& Multi-Ratio$=\exp\left({\pm\frac{i\pi}{3}}\right)$:}
The analysis above shows that the first-order changes in the multi-ratio at value $\lambda=\exp(\pm i \pi/3)$ are zero with respect to first order changes in $x_1,x_2,x_3$.
Higher-order changes in $x_1,x_2,x_3$ may move $\lambda$ away from this value and so to examine whether or not this really is a persistent feature the corresponding ideal will need to be generated.
In what follows this will be labeled the $[3111]+CR$ root structure.

Under the action of $GL(2,\mathbb{C})$, a canonical form of the sextic with the $[3111]+CR$ root structure is given by
\[
 p_3(z)=z^3(z^3-1).
\]
We now identify the coefficients of general sextic \eqref{the_polynomial_p_z} with the general fractional linear transformation of $p_3(z)$ via
\[
  p(z)=(c_3z+c_4)^6 p_3\left(\frac{c_1z+c_2 }{c_3z+c_4}\right).
\]
Calculating an elimination ideal with respect to the variable $c_1,c_2,c_3,c_4$ gives an ideal generated by the coefficients of the covariants
\begin{align}
 D_{ 4}^{[3111]+CR} &=
        3600 C_1 A_3+288 B_4 B_0-125 B_2^2, \nonumber\\
 D_{ 2}^{[3111]+CR} &=
        B_2 B_0-10 D_2,\nonumber \\
 D_{ 0}^{[3111]+CR} &= 
        11 B_0^2-25 D_0.\label{eq:the_generating_covariants_for_3111_CR}
\end{align}
This ideal is closed under differentiation and hence this case is actually a persistent one.
Up to local equivalence there is only one sextic with the $[3111]+CR$ structure, and hence this represents a single conformal class.
A particular representative of this class is the (Euclidean superintegrable) system
\begin{multline}
 V_{VI} =  a \left(x_3^2-2 (x_1-i x_2)^3+4 (x_1^2+x_2^2)\right)+b \left(2 x_1+2 i x_2-3 (x_1-i x_2)^2\right) \\
          +c \left(x_1-i x_2\right)+\frac{d}{x_3^2}+e 
\end{multline} 
which has classifying sextic
\[
 p(z)=3i \left(z^3+\frac{2}{x_3}\right)z^3.
\]

\paragraph{Subcase [3111] \& Multi-Ratio$\neq\exp\left({\pm\frac{i\pi}{3}}\right)$:}
Since the action of a translation on the multi-ratio $\lambda$ (c.f \eqref{eq:the_lambda_x_Jacobian}) is rank 1 everywhere on the connected set $\mathbb{C}^*\setminus\left\{0,1,\infty,\exp\left({\pm\frac{i\pi}{3}}\right)\right\}$ every point will lie in a single orbit under this action.
Hence this there is a single conformal class of systems with this structure.

A particular representative is given by the (Euclidean superintegrable) potential
\begin{multline}
 V_{II} =  a \left({x_1}^2+{x_2}^2+{x_3}^2\right)+b\frac{\left(x_1 - i x_2\right)}{\left(x_1 + i x_2\right)^3} 
          +c\frac{1}{\left(x_1 + i x_2\right)^2}+d\frac{1}{{x_3}^2}+e
\end{multline}
which has classifying sextic
\[
 p(z)=\left(\frac{6i}{x_3}z^3+\frac{9i}{x_1+i x_2}z^2-\frac{3i(-x_1+i x_2)}{(x_1+i x_2)^2}\right).
\]
If we use this sextic to calculate the covariants \eqref{eq:the_generating_covariants_for_3111_CR} we find
\[
 D_2^{[3111]+CR}\propto\frac{1}{x_3^2 (x_1+i x_2)^2},
\]
verifying that this class cannot also contain the $[3111]+CR$ root structure.

\paragraph{Case [2211]:}
The ideal $I_{[2211]}$ of conditions for the $[2211]$ root structure is generated by a single covariant
\begin{align*}
 G_{ 6}^{[2211]} &=
        50(10 F_3^{(2)}+2 D_3 B_0+55 F_3^{(1)}) A_3-4 (43 B_0^2-75 D_0) C_6+75E_2 B_4.
\end{align*}
This ideal is not closed under differentiation, but closes after 3 derivatives.
Straightforward calculations show that
\begin{align*}
 \left(I_{[411]} \cap I_{[33]}\right)^4&\subset \overline{I_{[2211]}}\subset I_{[411]} \cap I_{[33]}.
\end{align*}
Hence 
\[
 \sqrt{\overline{I_{[2211]}}}=I_{[411]} \cap I_{[33]}.
\]
 Any systems with coefficient functions that cause the polynomials in the $I_{[2211]}$ ideal to vanish identically are in the class of $[411]$ or $[33]$ systems and thus have been classified.

\paragraph{Case [21111]:}
As is well known, the ideal of conditions for the $[21111]$ root structure is generated by one condition, the discriminant.
In terms of the Hilbert basis, the ideal $I_{[21111]}$ is generated by the single covariant
\begin{align*}
 J_{ 0}^{[21111]} &=
        5393 B_0^5-20125 D_0 B_0^3+18750 D_0^2 B_0-31875 F_0 B_0^2+56250 F_0 D_0+28125 J_0.
\end{align*}
The ideal $I_{[21111]}$ is not closed under differentiation, but closes after five derivatives.
Straightforward calculations show
\begin{align*}
          \left(I_{[3111]}\right)^4&\subset \overline{I_{[21111]}}\subset I_{[3111]},
\end{align*}
and so
\[
 \sqrt{\overline{I_{[21111]}}}=I_{[3111]}.
\]
Hence any systems with coefficient functions that identically satisfies the $I_{[21111]}$ ideal are in the $[3111]$ class and have already been classified above.

\paragraph{Case [111111]:}
All systems corresponding to sextics with a persistent root of multiplicity two or greater have been classified above.
All that remains is to classify systems corresponding to sextics with six distinct roots.
To identify a [111111]-type sextic up to the action of $GL(2,\mathbb{C})$, three absolute invariants need to be known.
An obvious choice would be three independent multi-ratios.
For the following discussion these independent multi-ratios are chosen to be
\begin{align}
  \lambda_4&=\frac{(\eta_1-\eta_2)(\eta_3-\eta_4)}{(\eta_2-\eta_3)(\eta_4-\eta_1)}, \nonumber\\
  \lambda_5&=\frac{(\eta_1-\eta_2)(\eta_3-\eta_5)}{(\eta_2-\eta_3)(\eta_5-\eta_1)}, \nonumber\\
  \lambda_6&=\frac{(\eta_1-\eta_2)(\eta_3-\eta_6)}{(\eta_2-\eta_3)(\eta_6-\eta_1)}. \label{eq:three_fundamental_cross_ratios}
\end{align}
These multi-ratios are useful for describing the root structure geometrically.
However they are not particularly suitable for examining translation of the regular point.
For this we define an alternative set of absolute invariants.
These are achieved by balancing out the covariant weight of the one dimensional representations 
\begin{equation}
 \mathbf{I}=\left(\frac{D_0}{B_0^2},\frac{F_0}{B_0^3},\frac{J_0}{B_0^5}\right).\label{eq:three_absolute_invariants_constructed_from_the_coefficients}
\end{equation}
It is safe to assume that $B_0$ is non-zero as doing otherwise leads back to the case $[411]$.
Examining the absolute invariants in $\mathbf{I}$ should give equivalent results to examining the action of the multi-ratios provided the map between them is invertible.
Since $\mathbf{I}$ can be expressed as a function of the multi-ratios (as can all absolute invariants), we can calculate the Jacobian determinant 
\begin{align}
  \det\left(\frac{\partial \mathbf{I}}{\partial \bm{\lambda}}\right)
     &= \det\left(\frac{\partial\big(\frac{D_0}{B_0^2},\frac{F_0}{B_0^3},\frac{J_0}{B_0^5}\big)}{\partial(\lambda_4,\lambda_5,\lambda_6)}\right).
     \label{eq:the_Jacobian_between_the_two_types_of_absolute_invariants}
\end{align}
 The Jacobian determinant \eqref{eq:the_Jacobian_between_the_two_types_of_absolute_invariants} factors nicely and can be seen to vanish if and only if there is either a double root or if the condition
\begin{align}
\lambda_4-\lambda_5 \lambda_6&=0 \label{eq:a_particular_incarnation_of_the_multiratio_in_terms_of_the_three_fundamental_cross_ratios}
\end{align}
 is satisfied (up to permutation of roots).
 Condition \eqref{eq:a_particular_incarnation_of_the_multiratio_in_terms_of_the_three_fundamental_cross_ratios} is a well known object in the literature, going by the name of the $M_6=-1$ multi-ratio condition and is an interesting object of study in its own right \cite{king2003tetrahedra}.
 Written in terms of the roots, \eqref{eq:a_particular_incarnation_of_the_multiratio_in_terms_of_the_three_fundamental_cross_ratios} is equivalent to
\begin{equation}
 \frac{(\eta_1-\eta_2)(\eta_5-\eta_3)(\eta_4-\eta_6)}{(\eta_2-\eta_5)(\eta_3-\eta_4)(\eta_6-\eta_1)}=-1.
\end{equation}
 
 Assuming the condition \eqref{eq:a_particular_incarnation_of_the_multiratio_in_terms_of_the_three_fundamental_cross_ratios} is satisfied, the roots can be assumed to take the values
\[
\mathbf{r}_0=\left(-1,0,1,\frac{1-\lambda_5\lambda_6}{1+\lambda_5\lambda_6},\frac{1-\lambda_5}{1+\lambda_5},\frac{1-\lambda_6}{1+\lambda_6}\right).
\]
The action of a translation on the value of $\lambda_4-\lambda_5 \lambda_6$ is (to first order) given by
\begin{align}
 \left.\frac{\partial (\lambda_4-\lambda_5 \lambda_6)}{\partial x_1}\right|_{\mathbf{r}_0}&=0,\nonumber\\
 \left.\frac{\partial (\lambda_4-\lambda_5 \lambda_6)}{\partial x_2}\right|_{\mathbf{r}_0}&=\frac{8(\lambda_5 \lambda_6-1)(\lambda_5^2 \lambda_6^2-\lambda_5^2 \lambda_6-\lambda_5 \lambda_6^2+\lambda_5^2+\lambda_6^2-\lambda_5-\lambda_6+1)^2}{27(1+\lambda_5 \lambda_6)(\lambda_6-1)(\lambda_5-1)(\lambda_5+1)(\lambda_6+1)},\nonumber\\
 \left.\frac{\partial (\lambda_4-\lambda_5 \lambda_6)}{\partial x_3}\right|_{\mathbf{r}_0}&=\frac{-8i(\lambda_5^2 \lambda_6^2-\lambda_5^2 \lambda_6-\lambda_5 \lambda_6^2+\lambda_5^2+\lambda_6^2-\lambda_5-\lambda_6+1)^2}{27(\lambda_6-1) (\lambda_5-1) (\lambda_5+1) (\lambda_6+1)}.
\end{align}
And so, remembering that all roots must be distinct, this action will only be rank zero if the condition
\begin{equation}
 \lambda_5^2 \lambda_6^2-\lambda_5^2 \lambda_6-\lambda_5 \lambda_6^2+\lambda_5^2+\lambda_6^2-\lambda_5-\lambda_6+1=0 \label{eq:the_additional_condition_for_the_persistant_multiratio_condition}
\end{equation}
is also satisfied.
Likewise, calculating the action of the derivative on \eqref{eq:the_additional_condition_for_the_persistant_multiratio_condition} shows that the action is (to first order) rank zero on \eqref{eq:the_additional_condition_for_the_persistant_multiratio_condition}.
So this is a promising candidate for a persistent feature.

\paragraph{Subcase [111111]; $M_6=-1$ \& $CR$:}
The ideal $I_{M_6+CR}$ of covariants vanishing under conditions \eqref{eq:a_particular_incarnation_of_the_multiratio_in_terms_of_the_three_fundamental_cross_ratios} and \eqref{eq:the_additional_condition_for_the_persistant_multiratio_condition}, can be calculated without too much effort.
This ideal is generated by the coefficients of the two covariants
\begin{align}
 F_{ 4}^{M_6+CR} &= 
        360 (49 C_1 B_0-48 E_1) A_3-193 B_2^2 B_0 -1896 C_3 C_1\nonumber \\
        &\phantom{=}\quad+288 D_0 B_4+3276 D_2 B_2, \nonumber \\
 F_{ 0}^{M_6+CR} &= 
        97 B_0^3-275 D_0 B_0+375 F_0.
\end{align}
Calculations show that the ideal $I_{M_6+CR}$ is closed under differentiation and hence represents a persistent feature.

To make conclusions about whether the bulk of the ideal $I_{M_6+CR}$, by which we mean the part of the algebraic variety for which the action of the full conformal group has maximal rank, corresponds a single conformal class requires us to understand the geometry of the corresponding algebraic set.
One could imagine a situation where the algebraic variety naturally forms two or more unconnected components, or where a component is disconnected when we remove points where the rank of the action is submaximal.
Thankfully such a situation would show up algebraically due to the following theorem (taken from Corollary~4.16 in Ref.~\cite{mumford1976algebraic}).

\begin{theorem}\label{thm:the_connectedness_of_complex_algebraic_varieties}
 Let $X\subset\mathbb{P}^n$ be an r-dimensional projective variety and let {$Y\subsetneqq{X}$} be a closed algebraic set.
 Then $X\setminus Y$ is connected in the classical topology.
\end{theorem}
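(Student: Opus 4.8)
The statement to be proved is Theorem~\ref{thm:the_connectedness_of_complex_algebraic_varieties}: that for an irreducible projective variety $X\subset\mathbb{P}^n$ and a proper closed algebraic subset $Y\subsetneqq X$, the complement $X\setminus Y$ is connected in the classical (Euclidean) topology. Since the excerpt explicitly attributes this to Corollary~4.16 of Ref.~\cite{mumford1976algebraic}, the ``proof'' here should really be a sketch of the standard argument; I would not attempt anything novel.

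\medskip

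\textbf{Approach.} The plan is to reduce to the curve case by a Bertini-type hyperplane-section argument, then invoke the classical fact that a nonsingular (or merely irreducible) complex projective curve is connected, which in turn rests on the theory of Riemann surfaces or on the maximum principle for holomorphic functions. First I would recall that it suffices to prove connectedness of the smooth locus: $X\setminus Y$ is connected iff its intersection with the (dense, open) smooth locus is connected, since every point of $X\setminus Y$ lies in the closure of a component meeting the smooth part, and the singular locus has complex codimension $\geq 1$ hence real codimension $\geq 2$, so it cannot disconnect anything. So I replace $X$ by $X_{\mathrm{sm}}\setminus Y$, a connected-or-not smooth quasi-projective variety, and must show it is connected.

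\medskip

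\textbf{Key steps.} (1) Normalize: replace $X$ by its normalization $\widetilde X\to X$, which is a finite surjective morphism from an irreducible normal variety; connectedness of $\widetilde X\setminus(\text{preimage of }Y)$ implies that of $X\setminus Y$ since the map is continuous and surjective. (2) Induct on $\dim X = r$. For $r=1$: an irreducible projective curve has connected normalization (a compact Riemann surface associated to an irreducible algebraic function field is connected), and removing finitely many points keeps a connected Riemann surface connected. (3) For $r>1$: choose a generic hyperplane $H\subset\mathbb{P}^n$. By Bertini's theorem the hyperplane section $X\cap H$ is again irreducible of dimension $r-1$ (this is where irreducibility of $X$, not just reducedness, is essential), and we may arrange $X\cap H \not\subset Y$. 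Now sweep out $X$ by a pencil of such hyperplanes: any two points $p,q\in X\setminus Y$ can be joined to a common generic hyperplane section by moving within the (lower-dimensional, hence connected by induction) sections, and the base of the pencil (a $\mathbb{P}^1$ of hyperplanes, minus the bad locus) is itself connected, letting us transport between sections. Stitching these together gives a path from $p$ to $q$ in $X\setminus Y$.

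\medskip

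\textbf{Main obstacle.} The delicate point is the hyperplane-section/pencil bookkeeping in step (3): one must ensure that \emph{for a generic member of the pencil} the section is irreducible, has the expected dimension, is not contained in $Y$, and that the ``exceptional'' hyperplanes (where any of these fail) form a proper closed subset of the $\mathbb{P}^1$ parametrizing the pencil — so that the parameter space of good hyperplanes stays connected. This is exactly the content that Bertini's irreducibility theorem and a dimension count provide, but making the induction airtight (especially handling the base locus of the pencil, where all sections meet, and checking it is not swallowed by $Y$) is the part that requires care. Since this is all classical and the paper only needs the statement, I would present the reduction-to-curves outline above and then cite \cite{mumford1976algebraic} for the remaining details rather than reproduce the full Bertini machinery.
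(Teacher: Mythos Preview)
The paper does not prove this theorem at all: it is quoted verbatim as Corollary~4.16 of Mumford's \emph{Algebraic Geometry I} and used as a black box, with no accompanying argument or sketch. Your proposal already anticipates this and supplies more than the paper does, namely a standard reduction-to-curves outline via normalization, induction on dimension, and a Bertini/pencil argument. That outline is essentially the route Mumford takes, so in that sense your approach agrees with the cited source even though it goes well beyond the paper itself.

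One small quibble with the sketch: the opening reduction ``it suffices to prove connectedness of the smooth locus'' is not quite the right first move. The real codimension~$\geq 2$ argument shows that removing the singular locus cannot disconnect a \emph{manifold}, but $X$ is not yet a manifold at its singular points, so local connectedness there is not automatic. The cleaner reduction is the one you do next anyway: pass to the normalization $\widetilde X$, which is a normal (hence locally irreducible, hence locally connected in the classical topology) variety, and run the induction there. Since you immediately normalize in step~(1), this is only a presentational wrinkle, not a genuine gap.
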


Since the rank of the action can be examined using polynomial conditions (i.e.\ determinants and sub-determinants) the locations where the rank drops form an algebraic subset of the algebraic set defined by \eqref{eq:the_additional_condition_for_the_persistant_multiratio_condition}.
Maple's polynomial irreducibility test indicates that condition \eqref{eq:the_additional_condition_for_the_persistant_multiratio_condition} is absolutely irreducible over $\mathbb{C}$, and so the algebraic set defined by \eqref{eq:the_additional_condition_for_the_persistant_multiratio_condition} is actually an algebraic variety.
Theorem~\ref{thm:the_connectedness_of_complex_algebraic_varieties} implies that the set of points for which the action has maximal rank will be connected.
So this action will be transitive on set of point satisfying
\eqref{eq:a_particular_incarnation_of_the_multiratio_in_terms_of_the_three_fundamental_cross_ratios} and \eqref{eq:the_additional_condition_for_the_persistant_multiratio_condition} provided the action has rank 1.
The conditions on the multi-ratios where the rank of the action drops to zero can be shown to correspond to roots of multiplicity 2 (or higher), and by the results of the [21111] case, these must be transient conditions, i.e.\ they cannot hold identically in this case.

So the ideal $I_{M_6+CR}$ defines an irreducible variety, corresponding to single conformal class.
A particular representative of the systems lying in this class is given by the (Euclidean superintegrable) system
\begin{align}
 V_{IV} = a (4 x_1^2+x_2^2+x_3^2)+b x_1+\frac{c}{x_2^2}+\frac{d}{x_3^2}+e 
\end{align}
which has classifying sextic
\[
 p(z)=\frac{3}{4x_2}(z^2+1)^3+\frac{6i}{x_3}z^3.
\]

\paragraph{Subcase [111111]; Rank 3 Jacobian:}
Assuming now that the multi-ratio condition \eqref{eq:a_particular_incarnation_of_the_multiratio_in_terms_of_the_three_fundamental_cross_ratios} is not satisfied, the Jacobian \eqref{eq:the_Jacobian_between_the_two_types_of_absolute_invariants} will be rank 3.
The rank of the action on the absolute invariants \eqref{eq:three_absolute_invariants_constructed_from_the_coefficients} can be used to examine the rank of the action on the multi-ratios \eqref{eq:three_fundamental_cross_ratios}.

Calculating the determinant of the Jacobian between the three absolute invariants \eqref{eq:three_absolute_invariants_constructed_from_the_coefficients} and the coordinates gives,
\begin{align}
 \det\left(\frac{\partial \mathbf{I}}{\partial\mathbf{x}}\right) 
           &=\det\left( \frac{\partial\big(\frac{D_0}{B_0^2},\frac{F_0}{B_0^3},\frac{J_0}{B_0^5}\big)}{\partial(x_1,x_2,x_3)}\right)  \nonumber\\
           &=\frac{\left(\begin{array}{l}2521 B_0^5-9625 D_0 B_0^3+6250 D_0^2 B_0\\ \qquad-7500 F_0 B_0^2 +65625 F_0 D_0-84375 J_0\end{array}\right)}{2^5 3^{10} 5^6 B_0^{11}}O_{ 0}.\label{the_Jacobian}
\end{align}
This shows that action is rank 3 away from 
\begin{align}
  J_0^{Jac} &= 2521 B_0^5-9625 D_0 B_0^3+6250 D_0^2 B_0-7500 F_0 B_0^2+65625 F_0 D_0-84375 J_0\nonumber\\
           &= 0 \label{J0_covariant_rank_2_condition} 
\end{align}
and 
\begin{align}
 O_{ 0} = 0.
\end{align}
A careful examination reveals that $O_0$ is a symmetric version of the aforementioned $M_6=-1$ multiratio condition.
Specifically
\begin{align}
 O_{ 0}\propto \prod_{\sigma\in \Sigma} \left(\begin{array}{l} 
                                                     (r_{\sigma(1)}-r_{\sigma(2)})(r_{\sigma(3)}-r_{\sigma(4)})(r_{\sigma(5)}-r_{\sigma(6)}) \\
                                              \qquad+(r_{\sigma(6)}-r_{\sigma(2)})(r_{\sigma(2)}-r_{\sigma(3)})(r_{\sigma(4)}-r_{\sigma(5)})
                                        \end{array}\right),
\end{align}
where $\Sigma$ is the fifteen elements of the permutation group that give the fifteen different versions of the $M_6=-1$ condition. 
By the discussion in the previous section, systems satisfying $O_{ 0}=0$ have already been considered.
So henceforth $O_{ 0}$ will be assumed non-zero.

Returning to the main argument, the action of the translations will be rank~3 on the absolute invariants $\mathbf{I}$ away from $J^{Jac}_0=0$ and $O_0 = 0$ and by theorem~\ref{thm:the_connectedness_of_complex_algebraic_varieties} this is connected.
Because the action is rank three on this connected, three-dimensional set, there can only be one orbit under this action.

 A particular representative of the systems in this orbit is given by the conformally-superintegrable potential
\[
 V_{S}=\frac{a}{(1+x_1^2+x_2^2+x_3^2)^2}+\frac{b}{x_1^2}+\frac{c}{x_2^2}+\frac{d}{x_3^2}+\frac{e}{(-1+x_1^2+x_2^2+x_3^2)^2}.
\]
The potential $V_{S}$ above is St\"ackel equivalent to the superintegrable potential
\[
 V_{S^\prime}=\frac{\alpha}{s_1^2}+\frac{\beta}{s_2^2}+\frac{\gamma}{s_3^2}+\frac{\delta}{s_4^2}+\epsilon
\]
defined over the 3-sphere
\[
 s_1^2+s_2^2+s_3^2+s_4^2=1.
\]

\paragraph{Subcase [111111]; Rank 2 Jacobian; $J^{Jac}_0\equiv0$:} Taking the condition $J^{Jac}_0$ on its own generates an ideal, which will be denoted $I_{Jac}$.
The differential closure of $I_{Jac}$ is easily seen by considering the following relations
\begin{align}
 \frac{\partial J^{Jac}_0}{\partial x}
      &=5 \left(X_{-1}-X_{+1}\right)J^{Jac}_0, \nonumber\\
 \frac{\partial J^{Jac}_0}{\partial y}
      &=5i\left(X_{-1}+X_{+1}\right)J^{Jac}_0, \nonumber\\
 \frac{\partial J^{Jac}_0}{\partial z}
      &=5\sqrt{2}\left(X_{ 0}\right)J^{Jac}_0.
\end{align}
The Hilbert dimension of the ideal $I_{Jac}$ is 6 and, a check for absolutely irreducibility using Maple returns a positive result.
This means (like $I_{M_6+CR}$) the ideal  $I_{Jac}$ gives an algebraic variety and hence is a connected set.
The local action of the conformal group is rank 4 and the action of translation of the regular point on the absolute invariants is rank 2 (which is clearly distinct from the local action).
Hence the generic action on this space will be rank 6 and thus can only be one orbit under the action of the conformal group.

So $I_{Jac}$ represents a single conformal class and a particular representative of the systems in this class is given by
\begin{align}
 V_{I} = a ({x_1}^2+{x_2}^2+{x_3}^2)+\frac{b}{{x_1}^2}+\frac{c}{{x_2}^2}+\frac{d}{{x_3}^2}+e\label{Potential_I}
\end{align}
which has the classifying sextic
\[
 p(z)=\frac{6i}{x_3}z^3+\frac{3}{4x_2}(1+z^2)^3-\frac{3i}{4x_1} (1-z^2)^3.
\]
\paragraph{Subcase [111111]; Rank 1 Jacobian:}
Any further restrictions would necessarily show up when the Jacobian is rank 1.
Examining the $2\times2$ subminors of the Jacobian under the restriction $J_0^{Jac}=0$ gives an additional $14$th order covariant that must vanish identically, namely
\begin{align}
N_1^{Rank1}
    &= (125 F_0 +49 B_0^3-125 D_0 B_0 ) H_1 -20 (-25 D_0 +14 B_0^2) J_1 +150 L_1 B_0.
\end{align}
The ideal $I_{Rank1}$ generated by the coefficients of the covariants $N_1^{Rank1},J_0^{Jac}$ closes after 3 derivatives.
Straightforward calculations show
\begin{align*}
 \left(I_{M_6+CR} \cap I_{[3111]}\right)^3&\subset \overline{I_{Rank1}}\subset I_{M_6+CR} \cap I_{[3111]}.
\end{align*}
Hence $\sqrt{\overline{I_{Rank1}}}=I_{M_6+CR} \cap I_{[3111]}$ and all corresponding systems have already been classified.

This completes the classification. There are a total of 10 conformal classes.
A given maximal-parameter, second-order conformally-superintegrable system can be placed into one of the aforementioned conformal classes by identifying which of the ideals above identically vanish.
Table~\ref{table:relative_invariants_for_the_conformally_superintegrable_potentials} shows the pattern of vanishing ideals for each of the representative systems and hence for every system in that class.

\begin{table}
\centering
\renewcommand{\arraystretch}{1.2}
{
\begin{tabular}{|c|c|c|c|c|c|c|c|c|c|c|c|}
%\toprule
\hline
   & $I_{[0]}$ & $I_{[6]}$ & $I_{[51]}$ & $I_{[411]}$ & $I_{[33]}$ & $I_{[3111]}$ & $I_{[3111]+CR}$ & $I_{M_6+CR}$ & $I_{Jac}$ \\
%\midrule
\hline
\hline
$S$	&   &   &   &   &   &   &   &   &   \\
$I$	&   &   &   &   &   &   &   &   & 0 \\
$II$	&   &   &   &   &   & 0 &   &   & 0 \\
$IV$	&   &   &   &   &   &   &   & 0 & 0 \\
$V$	&   &   &   & 0 &   & 0 &   & 0 & 0 \\
$VI$	&   &   &   &   &   & 0 & 0 & 0 & 0 \\
$VII$	&   &   & 0 & 0 &   & 0 & 0 & 0 & 0 \\
$O$	& 0 & 0 & 0 & 0 & 0 & 0 & 0 & 0 & 0 \\
$OO$	&   &   &   &   & 0 & 0 & 0 & 0 & 0 \\
$A$	&   & 0 & 0 & 0 & 0 & 0 & 0 & 0 & 0 \\
\hline
%\bottomrule
\end{tabular}}
\caption{Vanishing irreducible ideals for the ten maximum-parameter systems}\label{table:relative_invariants_for_the_conformally_superintegrable_potentials}
\end{table}

\section{Conclusion and Future Directions}

The results obtained allow us to confidently state the systems listed elsewhere in the literature \cite{kalnins2006classification,kalnins2008fine,kalnins2011laplace} form a complete list of St\"ackel equivalent 3-dimensional, maximum-parameter, superintegrable potentials over conformally-flat spaces.
By considering specialisations of the potentials above that give St\"ackel multipliers for a flat metric, we reprove the result of Ref.~\cite{kalnins2007algebraicvarieties3d}.
There are 10 superintegrable systems over Euclidean space, nine of these flat-space systems appear explicitly above and the tenth one is St\"ackel equivalent to system $V$, as given in Ref.~\cite{kalnins2007algebraicvarieties3d}.
Likewise, by considering those St\"ackel multipliers which give a non-zero constant curvature metric it can be shown there are six systems on the 3-sphere, St\"ackel equivalent to $S,I,II,IV,VI,$ and $OO$.
This proves that the list of known maximum-parameter systems is complete.

The ten conformal classes given above have a natural partial ordering, obtained by considering ideal containment for the classifying polynomial ideals.
This partial order suggests a way to think of obtaining one conformal class as the limit of another.
For example, the ideal $I_{[3111]+CR}$ contains the ideal $I_{[3111]}$ as a subideal, meaning the algebraic variety defined by $I_{[3111]}$ contains the algebraic variety defined by $I_{[3111]+CR}$.
A particular point in a $[3111]$-type system corresponds to a generic point in the variety defined by $I_{[3111]}$.
A conformal motion can then move this point in the variety and take it arbitrarily close to $[3111]+CR$ subvariety.
This partial ordering is displayed in figure~\ref{figure:detailed_ideal_containment_diagram}.
The arrows  point from subideal to superideal.
Ideal constainment for any of the ideals provided can be decided by determining whether there is a directed path between the ideals shown in figure~\ref{figure:detailed_ideal_containment_diagram}.
If instead we wish to think of this digrams in terms of subvarieties, i.e.\ thinking about limiting from one variety to another, the direction of the arrows should be reversed.
%This partial ordering appeared in the literature before from direct consideration of the degenerations of second-order superintegrable systems and the contraction of quadratic algebras.

There are three pieces of information contained in the boxes in figure~\ref{figure:detailed_ideal_containment_diagram}, the first is the name of the chosen representative of the system in the classification above, the second is the factor structure for the sextic associated to the system, the precise details of which can be found above, the third is a reference to the bracket notation of B{\^o}cher, where a partition of five indicates the generic separable coordinates in which the system separates \cite{bocher1894reihenentwickelungen}, and the last piece of information is the Hilbert dimension of the ideal in the variables $Y_i$.
As should be expected, almost all of the minimal degenerations shown in figure~\ref{figure:detailed_ideal_containment_diagram} reduce of the Hilbert dimension by one. The only exception to this is the degeneration from type-$[6]$ sextic to type-$[0]$ sextic, where the two degrees of freedom coming from the position of the single root and the value of the leading coefficients are simultaneously lost.

The systems denoted by $O,OO$ and $A$ only separate in non-generic separable coordinates and so do not have a B\^ocher bracket associated to them.
However for the 7 classes that do, the partial-ordering of the B\^ocher brackets (as partitions of five) is the same as the partial-ordering given by the ideal containment relations.

When considering the St\"ackel equivalent systems, this limiting scheme should help construct a contraction scheme for their quadratic algebras which can potentially be used to provide an Askey-Wilson-type scheme for the special functions that arise in the two-parameter models of the quantum quadratic-algebras.

\begin{figure}[h!]
   \centering
\begin{displaymath}
\xymatrix{
                                                                                                                                               & { \begin{tabular}{|c|} \hline $\vphantom{\Big(}\mathbf{O}$   \\ $[0]$         \\ $n/a$     \\ $d=0$ \\ \hline \end{tabular}}\ar[dr]        &                                                                                                              \\
                                                                                                                                               & { \begin{tabular}{|c|} \hline $\vphantom{\Big(}\mathbf{VII}$ \\ $[51]$        \\ $(5)$     \\ $d=3$ \\ \hline \end{tabular}}\ar[d]\ar[dl]  & { \begin{tabular}{|c|} \hline $\vphantom{\Big(}\mathbf{A}$ \\ $[6]$  \\ $n/a$ \\ $d=2$ \\ \hline \end{tabular}}\ar[d]\ar[l] \\
  { \begin{tabular}{|c|} \hline {$\vphantom{\Big(}\mathbf{III}$}\\ $[411]$    \\ $(23)$  \\ $d=4$ \\ \hline \end{tabular}}\ar[d]\ar[dr]    & { \begin{tabular}{|c|} \hline $\vphantom{\Big(}\mathbf{VI} $ \\ $[3111]+CR$   \\ $(41)$    \\ $d=4$ \\ \hline \end{tabular}}\ar[d]\ar[dl] & { \begin{tabular}{|c|} \hline $\vphantom{\Big(}\mathbf{OO}$\\ $[33]$ \\ $n/a$ \\ $d=3$ \\ \hline \end{tabular}}\ar[l]        \\
  { \begin{tabular}{|c|} \hline $\vphantom{\Big(}\mathbf{IV}$ \\ $M_6=-1$ \\ $(311)$ \\ $d=5$ \\ \hline \end{tabular}}\ar[dr]              & { \begin{tabular}{|c|} \hline $\vphantom{\Big(}\mathbf{II}$  \\ $[3111]$      \\ $(221)$   \\ $d=5$ \\ \hline \end{tabular}}\ar[d]         &                                                                                                              \\
                                                                                                                                               & { \begin{tabular}{|c|} \hline $\vphantom{\Big(}\mathbf{I}  $ \\ $J^{Jac}_0=0$ \\ $(2111)$  \\ $d=6$ \\ \hline \end{tabular}}\ar[d]         &                                                                                                              \\
                                                                                                                                               & { \begin{tabular}{|c|} \hline $\vphantom{\Big(}\mathbf{S} $  \\ $[111111]$    \\ $(11111)$ \\ $d=7$ \\ \hline \end{tabular}}               & { \begin{tabular}{ rl } \hline Key:& \textbf{System Name} \\ & Factor Structure \\ & B{\^o}cher Bracket \\ & Hilbert Dimension $(d)$ \\ \hline \end{tabular}}
}
\end{displaymath}
\caption{Subvariety Hasse Diagram}\label{figure:detailed_ideal_containment_diagram}
\end{figure}
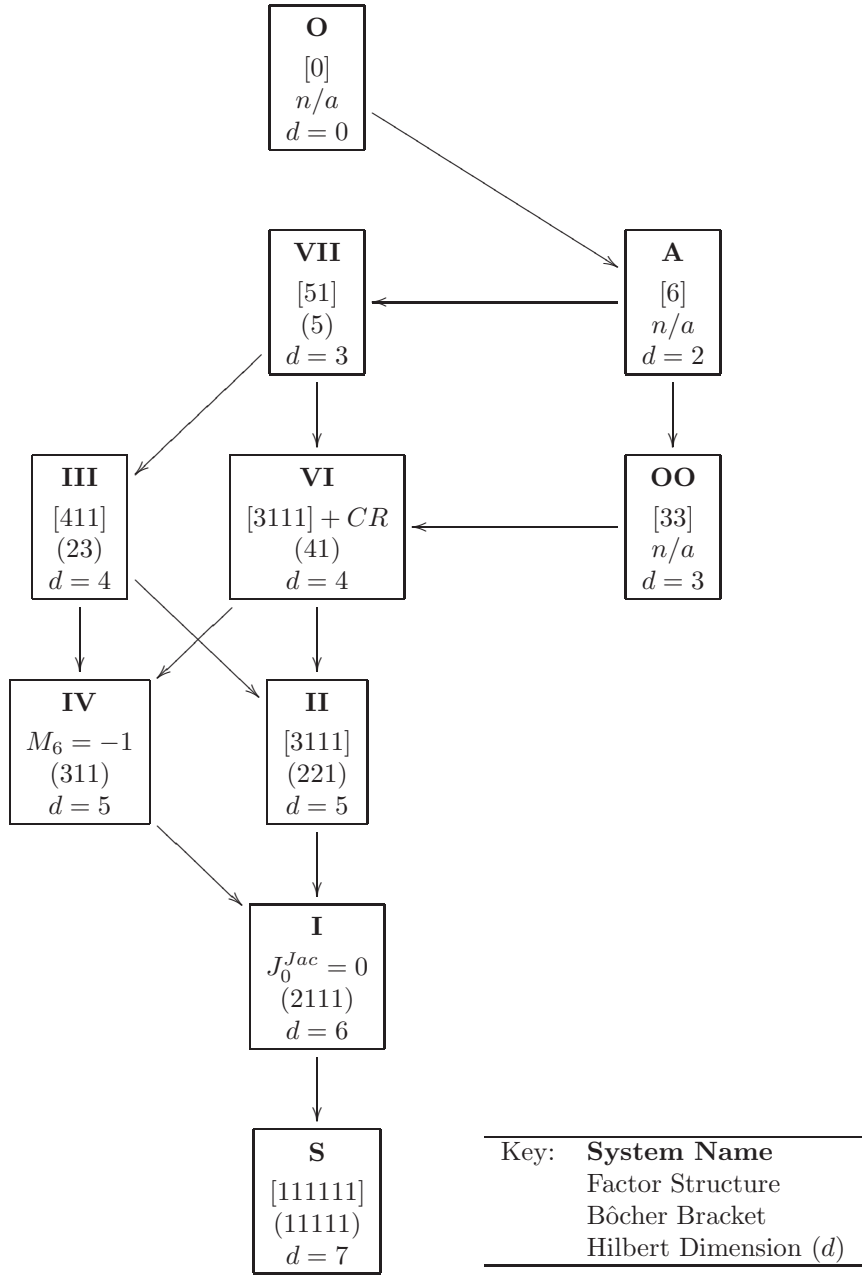

\bibliography{./papers}

\begin{thebibliography}{10}

\bibitem{kalnins2000completenessE2C}
E.G. Kalnins, W~Miller, Jr., and G.S. Pogosyan.
\newblock Completeness of multiseparable superintegrability in e2, c.
\newblock {\em Journal of Physics A: Mathematical and General}, 33(22):4105,
  2000.

\bibitem{kalnins2000completeness2sphere}
E.G. Kalnins, W.~Miller, Jr., and G.S. Pogosyan.
\newblock Completeness of multiseparable superintegrability on the complex
  2-sphere.
\newblock {\em Journal of Physics A: Mathematical and General}, 33(38):6791,
  2000.

\bibitem{kalnins2001completeness}
E.G. Kalnins, J.M. Kress, G.S. Pogosyan, and W.~Miller, Jr.
\newblock Completeness of superintegrability in two-dimensional
  constant-curvature spaces.
\newblock {\em Journal of Physics A: Mathematical and General}, 34(22):4705,
  2001.

\bibitem{kalnins2002superintegrability}
E.G. Kalnins, J.M. Kress, and P.~Winternitz.
\newblock Superintegrability in a two-dimensional space of nonconstant
  curvature.
\newblock {\em Journal of Mathematical Physics}, 43:970, 2002.

\bibitem{kalnins2006classification}
E.G. Kalnins, J.M. Kress, and W.~Miller, Jr.
\newblock Classification of superintegrable systems in three dimensions.
\newblock {\em Bulg. J. Phys}, 33(s1):174--189, 2006.

\bibitem{kress2007equivalence}
J.~M. Kress.
\newblock Equivalence of superintegrable systems in two dimensions.
\newblock {\em Physics of Atomic Nuclei}, 70(3):560--566, 2007.

\bibitem{kalnins2007algebraicvarieties2d}
E.G. Kalnins, J.M. Kress, and W.~Miller, Jr.
\newblock Nondegenerate 2d complex euclidean superintegrable systems and
  algebraic varieties.
\newblock {\em Journal of Physics A: Mathematical and Theoretical},
  40(13):3399, 2007.

\bibitem{kalnins2007fine}
E.G. Kalnins, J.M. Kress, and W.~Miller, Jr.
\newblock Fine structure for 3d second-order superintegrable systems:
  three-parameter potentials.
\newblock {\em Journal of Physics A: Mathematical and Theoretical},
  40(22):5875, 2007.

\bibitem{kalnins2008fine}
E.G. Kalnins, J.M. Kress, and W.~Miller, Jr.
\newblock Fine structure for second order superintegrable systems.
\newblock In {\em Symmetries and Overdetermined Systems of Partial Differential
  Equations}, pages 77--103. Springer, 2008.

\bibitem{kalnins2014contractions}
E.G. Kalnins and W.~Miller, Jr.
\newblock Quadratic algebra contractions and 2nd order superintegrable systems.
\newblock {\em arXiv preprint arXiv:1401.0830}, 2014.

\bibitem{kalnins(2005):053510}
E.G. Kalnins, J.M. Kress, and W.~Miller, Jr.
\newblock Second order superintegrable systems in conformally flat spaces. ii.
  the classical two-dimensional st[a-umlaut]ckel transform.
\newblock {\em Journal of Mathematical Physics}, 46(5):053510, 2005.

\bibitem{kalnins(2006):093501}
E.G. Kalnins, J.M. Kress, and W.~Miller, Jr.
\newblock Second-order superintegrable systems in conformally flat spaces. v.
  two- and three-dimensional quantum systems.
\newblock {\em Journal of Mathematical Physics}, 47(9):093501, 2006.

\bibitem{kalnins2011laplace}
E.G. Kalnins, J.M. Kress, W.~Miller, Jr., and S.~Post.
\newblock Laplace-type equations as conformal superintegrable systems.
\newblock {\em Advances in Applied Mathematics}, 46(1):396--416, 2011.

\bibitem{daskaloyannis2010quadratic}
C.~Daskaloyannis and Y.~Tanoudis.
\newblock Quadratic algebras for three-dimensional superintegrable systems.
\newblock {\em Physics of Atomic Nuclei}, 73(2):214--221, 2010.

\bibitem{marquette2012classical}
I.~Marquette.
\newblock Classical ladder operators, polynomial poisson algebras, and
  classification of superintegrable systems.
\newblock {\em Journal of Mathematical Physics}, 53:012901, 2012.

\bibitem{kalnins2007algebraicvarieties3d}
E.G. Kalnins, J.M. Kress, and W.~Miller, Jr.
\newblock Nondegenerate three-dimensional complex euclidean superintegrable
  systems and algebraic varieties.
\newblock {\em Journal of Mathematical Physics}, 48:113518, 2007.

\bibitem{boyer:778}
C.~P. Boyer, E.~G. Kalnins, and W.~Miller, Jr.
\newblock St\"ackel-equivalent integrable hamiltonian systems.
\newblock {\em SIAM Journal on Mathematical Analysis}, 17(4):778--797, 1986.

\bibitem{schobel2012variety}
K.P. Sch{\"o}bel.
\newblock The variety of integrable killing tensors on the 3-sphere.
\newblock {\em arXiv preprint arXiv:1205.6227}, 2012.

\bibitem{schobel2012algebraic}
K.P. Sch{\"o}bel.
\newblock Algebraic integrability conditions for killing tensors on constant
  sectional curvature manifolds.
\newblock {\em Journal of Geometry and Physics}, 62(5):1013--1037, 2012.

\bibitem{DGPS}
G.-M.; Pfister G.; Sch{\"o}nemann~H. Decker, W.;~Greuel.
\newblock {\sc Singular} {3-1-6} --- {A} computer algebra system for polynomial
  computations.
\newblock 2012.
\newblock http://www.singular.uni-kl.de.

\bibitem{kalnins2013nonconformallyflatsuperintegrability}
E.G. Kalnins, J.M. Kress, and W.~Miller, Jr.
\newblock Superintegrability in a non-conformally-flat space.
\newblock {\em Journal of Physics A: Mathematical and Theoretical},
  46(2):022002, 2013.

\bibitem{post:265}
S~Post.
\newblock Coupling constant metamorphosis, the st[a-umlaut]ckel transform and
  superintegrability.
\newblock {\em AIP Conference Proceedings}, 1323(1):265--274, 2010.

\bibitem{1751-8121-43-3-035202}
E.G. Kalnins, W.~Miller, Jr., and S.~Post.
\newblock Coupling constant metamorphosis and n th-order symmetries in
  classical and quantum mechanics.
\newblock {\em Journal of Physics A: Mathematical and Theoretical},
  43(3):035202, 2010.

\bibitem{kalnins(2005):103507}
E.G. Kalnins, J.M. Kress, and W.~Miller, Jr.
\newblock Second order superintegrable systems in conformally flat spaces. iii.
  three-dimensional classical structure theory.
\newblock {\em Journal of Mathematical Physics}, 46(10):103507, 2005.

\bibitem{MR1694364}
P.J. Olver.
\newblock {\em Classical invariant theory}, volume~44 of {\em London
  Mathematical Society Student Texts}.
\newblock Cambridge University Press, Cambridge, 1999.

\bibitem{MR2102640}
J.V. Chipalkatti.
\newblock Invariant equations defining coincident root loci.
\newblock {\em Arch. Math. (Basel)}, 83(5):422--428, 2004.

\bibitem{hassett2007introduction}
B.~Hassett.
\newblock {\em Introduction to algebraic geometry}.
\newblock Cambridge University Press, 2007.

\bibitem{king2003tetrahedra}
A.D. King and W.K. Schief.
\newblock Tetrahedra, octahedra and cubo-octahedra: integrable geometry of
  multi-ratios.
\newblock {\em Journal of Physics A: Mathematical and General}, 36(3):785,
  2003.

\bibitem{mumford1976algebraic}
D.~Mumford.
\newblock {\em Algebraic geometry}.
\newblock Springer Berlin Heidelberg New York, 1976.

\bibitem{bocher1894reihenentwickelungen}
M.~B{\^o}cher.
\newblock {\em {\"U}ber die Reihenentwickelungen der Potentialtheorie}.
\newblock Teubner, 1894.

\end{thebibliography}
\bibliographystyle{unsrt}

\end{document}